\newcommand{\layout}{twocolumn}
\IfStrEq{\layout}{onecolumn}
  {\booltrue{isSingle}}
  {\boolfalse{isSingle}}
\newcommand{\plotfactor}{0.45}}   
\newcommand{\plotfactor}{0.9}}  
\newlength{\spEm}
\newlength{\spCm}
\newcommand{\preSection}      {\vspace{-0.85\spEm}}
\newcommand{\preSectionHard}  {\vspace{-1.7\spEm}}
\newcommand{\postFloat}       {\vspace{-0.85\spEm}}
\newcommand{\preBio}          {\vspace{-1.0\spCm}}
\newcommand{\preApdx}          {\vspace{-0.5\spCm}}
\newcommand{\algoSepIn}       {\vspace{0.25\spEm}}
\newcommand{\algoSep}         {\vspace{0.35\spEm}}
\newlength{\bioPicWidth}\setlength{\bioPicWidth}{1in}
\newlength{\bioPicHeight}\setlength{\bioPicHeight}{1.2in}
\newcommand{\bioPic}[1]{\includegraphics[width=\bioPicWidth,height=\bioPicHeight,clip,keepaspectratio]{#1}}
\newcommand\Set[1]{\mathbb{#1}} 
\newcommand{\txt}[1]{{\textsf{#1}}}
\newcommand{\tbm}[1]{\textbf{\textsf{#1}}} %
\newcommand{\defeq}{\triangleq}
\newcommand{\xmark}{\ding{55}}
\newcommand{\raum}[1]{\mathcal{#1}}
\newcommand{\RKHS}{\ensuremath{\raum{H}}}
\newcommand{\adjoint}[1]{\ensuremath{{#1}^{*}}}
\newtheorem{definition}{Definition}
\newtheorem{assumption}{Assumption}
\newtheorem{lemma}{Lemma}
\newtheorem{theorem}{Theorem}
\newtheorem{corollary}{Corollary}
\newtheorem{proposition}{Proposition}
\newtheorem{remark}{Remark}
\newcommand {\abs}[1]{\left\vert\, #1 \,\right\vert}
\newcommand{\norm}[1]{\lVert #1 \rVert}
\providecommand{\norm}[1]{\lVert#1\rVert}
\providecommand{\SVDr}[1]{[\![#1]\!]_r}
\providecommand{\abs}[1]{\lvert#1\rvert}
\providecommand{\norm}[1]{\lVert#1\rVert}
\providecommand{\SVDr}[1]{[\![#1]\!]_r}
\providecommand{\abs}[1]{\lvert#1\rvert}
\newcommand{\lilsum}{\textstyle\sum}
\newcommand{\HS}[1]{{\rm{HS}}\left(#1\right)} %
\newcommand{\reg}{\gamma}
\newcommand{\EEstim}{\widehat{\Estim}}  %
\newcommand{\Estim}{{G}}  %
\newcommand{\infEstim}{{G}}
\newcommand{\EAEstim}{\widehat{\AEstim}}  %
\newcommand{\AEstim}{{A}}  %
\newcommand{\EBEstim}{\widehat{\BEstim}}  %
\newcommand{\BEstim}{{B}}  %
\newcommand{\innerprod}[2]{\langle #1 {,} #2 \rangle}
\newcommand{\scheduleVar}{\txt{\textbf{p}}}
\newcommand{\mpcB}{\txt{\textbf{B}}}
\DeclareMathAlphabet{\mathsfit}{T1}{\sfdefault}{\mddefault}{\itdefault}
\SetMathAlphabet{\mathsfit}{bold}{T1}{\sfdefault}{\mddefault}{\sldefault}
\newcommand{\ES}{{\widehat{S}}} %
\newcommand{\aES}{{\widehat{S}^*}} %
\newcommand{\nyES}{{\widetilde{S}}} %
\newcommand{\concat}[2]{\begin{bsmallmatrix}
  #1\\
  #2
\end{bsmallmatrix}}
\newcommand{\IN}{{{X}}}
\newcommand{\OUT}{{{Z}}}
\newcommand{\truOUT}{{L_{{\nu}}^2}}
\newcommand{\truIN}{{L_{{\mu^{\prime}}}^2}}
\newcommand{\spIN}{{\mathcal{H}_{\IN}}}
\newcommand{\spOUT}{{\mathcal{H}_{\OUT}}}
\newcommand{\spOUTx}{{\mathcal{H}_{{X}}}}
\newcommand{\projP}{\widetilde{P}}
\newcommand{\Kreg}{\bm{K}_{\reg}}
\DeclareMathOperator*{\argmin}{arg\,min}
\definecolor{TUMblue}{RGB}{0, 101, 189}
\definecolor{TUMwhite}{RGB}{255, 255, 255}
\definecolor{TUMblack}{RGB}{0, 0, 0}
\tikzset{%
  bullet/.style={ yshift = -.1mm
    , scale = .8},
  extfill/.style={fill = TUMblue!20}, %
  frozenfill/.style={fill = gray, fill opacity = .4}, %
  ext/.style={trapezium, trapezium angle=67.5, draw,
  inner ysep=5pt, outer sep=0pt, extfill,
  minimum height=1.2cm, minimum width=2pt},
  extl/.style={isosceles triangle,
    isosceles triangle apex angle=60, minimum height=1.2cm,
    draw, extfill, minimum width=0pt},
  bias/.style={minimum height=0pt,
    draw, extfill, minimum width=2.4em},
  block/.style    = {draw, thick, rectangle, minimum height = 3em, minimum width = 5em},
  sum/.style      = {draw, circle, node distance = 2cm}, %
  con/.style      = {draw, circle, node distance = 2cm}, %
  input/.style    = {coordinate}, %
  int/.style    = {coordinate}, %
  missing/.style={
    draw=none,
    fill=none,
    yshift = 0.1cm,
    scale = 1.5,
    execute at begin node={\color{black}{$\vdots$}}
  },
  output/.style   = {coordinate}, %
  exts/.style={ext, 
  minimum height=2cm},
  pini/.style = {pin edge={to-,thick,black}}, %
  pino/.style = {pin edge={-to,thick,black}}, %
  blk/.style = {draw, thick, fill=TUMblue!20, minimum size=2.8em, minimum width = 6em, text centered, text width = 9em},
  theblk/.style = {blk, text width = 5em, minimum height = 4em, fill = none, text centered},
}
\algrenewcommand\textproc{}           
\algrenewcommand\algorithmicrequire{\textbf{Input:}}
\algrenewcommand\algorithmicensure{\textbf{Output:}}
\algrenewcommand\alglinenumber[1]{\footnotesize\texttt{\textsc{A#1}}}
\begin{document}
\title{\textcolor{black}{Nonparametric Control Koopman Operators}}
\author{Petar Bevanda, Bas Driessen, Lucian Cristian Iacob, Stefan Sosnowski, Roland T\'{o}th and Sandra Hirche
\thanks{This work was supported by the European Union’s Horizon program under Grant No. 101093822 (SeaClear2.0); by the Deutsche Forschungsgemeinschaft (DFG) under Grant No. 535860958 (ALeSCo); and by the European Research Council Consolidator Grant No. 864686 (CO-MAN).}
\thanks{P.~Bevanda, S.~Sosnowski, and S.~Hirche, are with Chair of Information-oriented Control, TU M\"{u}nchen, Germany. {\tt{\small \{petar.bevanda, sosnowski, hirche\}@tum.de}}. S.~Hirche is also affiliated with the Munich Data Science Institute (MDSI) and the Munich Institute of Robotics and Machine Intelligence (MIRMI), TU M\"{u}nchen, Germany.}
\thanks{L.C.~Iacob, and R.~T\'{o}th  are with the Control Systems Group, TU/e,
Eindhoven, The Netherlands {\tt{\small \{l.c.iacob@,r.toth@\}tue.nl}}. R.~T\'{o}th is also affiliated with the Systems and Control Laboratory, HUN-REN Institute for Computer Science and Control, Budapest, Hungary.}
\thanks{Digital Object Identifier 10.1109/TAC.2026.3693027}}
\maketitle
\pubidadjcol
\begin{abstract}
This paper presents a novel Koopman composition operator representation framework for control systems in reproducing kernel Hilbert spaces (RKHSs) that is free of explicit dictionary or input parametrizations. By establishing fundamental equivalences between different model representations, we are able to close the gap of control system operator learning and infinite-dimensional regression, enabling various empirical estimators and the connection to the well-understood learning theory in RKHSs under one unified framework. Consequently, our proposed framework allows for arbitrarily accurate finite-rank approximations in infinite-dimensional spaces and leads to finite-dimensional predictors without a priori restrictions to a finite span of functions or inputs. To enable applications to high-dimensional control systems, we improve the scalability of our proposed control Koopman operator estimates by utilizing sketching techniques. Numerical experiments demonstrate superior prediction accuracy compared to bilinear EDMD, especially in high dimensions. Finally, we show that our learned models are readily interfaced with linear-parameter-varying techniques for model predictive control. 
\end{abstract}
\begin{IEEEkeywords}
Data-driven modeling, Nonlinear control systems, Kernel methods, Machine learning, Koopman operators, Reproducing kernel Hilbert spaces
\end{IEEEkeywords}
\preSection
\section{Introduction}
\IEEEPARstart{R}{ecent} years have seen an ever-growing interest across different fields in constructing operator-theoretic models that can provide \textit{global} insight into physical or biological characteristics of observed phenomena \cite{Engel2000}, facilitating tractable analysis and control design for nonlinear dynamics \cite{BEVANDA2021197,Brunton2022,Otto2021,Busidic2012}. While, historically, modeling insight based on \textit{first principles} form physics was the driving force in modeling, 
increasing system complexity 
\cite{MEZIC2004101,Mezic2005} limits their utility for 
modeling in engineering, necessitating  the use of data-driven methods. A promising framework that has recently reemerged and gained traction in the data-driven modeling community is based on the Koopman operator \cite{Koopman1931}, whose spectral decomposition can enable linear superposition of signals for possibly highly nonlinear systems \cite{Busidic2012}. This representational simplicity of dynamics inspired a bevy of applications in system identification \cite{Li2017,Otto2019, Azencot2020}, soft robotics \cite{Bruder2021b,Haggerty2023}, optimal control \cite{Villanueva2021, Guo2022,Houska2023}, to name a few.

\subsubsection*{Koopman-based representations for control systems}
As the Koopman framework was originally developed for autonomous systems, to accommodate control inputs, different methods have been proposed. These range from heuristically selecting a linear time-invariant (LTI) model class \cite{KORDA2018149}, having a finite set of input values and describing a switched model \cite{PEITZ2019184}, or analytically deriving the lifted representation \cite{Goswami2017}.
It has become established that control-affine systems can be written as bilinear lifted models under certain conditions, at least in continuous-time. The authors of \cite{IACOB2024} show that for both continuous- and discrete-time systems with inputs, an invariant Koopman form can be analytically derived, granted that the autonomous part is exactly embedded. The resulting model class contains a state and input-dependent input contribution, which is often not bilinear, especially in the discrete-time case. Thus, a globally exact finite-dimensional representation generally requires a non-affine control input dependence or a recasting to a \textit{linear parameter-varying} (LPV) model form. While this has been shown on an analytic level for finite-dimensional Koopman operator-based representations, it is an open question whether nonlinear input terms are unavoidable in the infinite-dimensional case and if approximation errors could be handled under certain but general assumptions.
\begin{table*}[t!]
  \caption{Comparison of existing {linear operator learning} approaches for control systems. Our cKOR approach is based on risk minimization and works with infinite-dimensional autonomous and control dependence.}\label{tab:comp}
\centering 
  \begin{tabular}{l|ccccccc} \toprule
   \textsc{Approach} & \txt{controls} &  $\dim(\txt{input space})$ &  $\dim(\txt{output space})$ &  $|\txt{datasets}|$& \txt{risk notion} & $\txt{ERM}$ & \txt{scalability}  \\ \midrule
      switched $\bm{u}$ \cite{PEITZ2019184} & quantized & $\text{finite}$ & $\text{finite}$ &  ${n_{{u}}}{+}1$ & \xmark & \xmark & \xmark  \\ 
       bEDMD: \cite{Bruder2021,otto2023learning}; & arbitrary & $\text{finite}$ & $\text{finite}$ &  $1$  & \xmark & \xmark & \checkmark \\ 
$\{\text{EDMD}(\bm{u}_i)\}^{n_u}_{i=0}$\cite{Peitz2020,Nuske2023,Philipp2023b} & constant & $\text{finite}$  & $\text{finite}$ &  ${n_{{u}}}{ + }1$ & \xmark & \xmark & \xmark \\ 
    \bf \txt{cKOR} (ours) & \bf arbitrary & $\bm{\infty}$ & $\bm{\infty}$ & \bf 1 & $\geq\|{\mathrm{\txt{error}}}\|^2_{\mathrm{op}}$ & \bf \ding{52} & \bf \ding{52}  \\
    \bottomrule
  \end{tabular}
  \postFloat
  \postFloat
\end{table*}
\subsubsection*{Data-driven operator-based approaches for control} 
A number of deep learning-enhanced, yet parametric, methods \cite{Guo2023,Haseli2023,Bevanda2022} have been proposed to capture nonlinear data relations, but commonly lack rigorous learning-theoretic foundations. In contrast, kernel-based operator learning provides a powerful alternative \cite{BATLLE2024112549} that is mathematically and implementation-wise simple, but offers rigorously established avenues for approximation error analysis.
Unsurprisingly, the aforementioned has led to a recent increase in learning-theoretic understanding of nonparametric Koopman operator-based models \cite{kostic2022learning,kostic2023sharp,KKR_neurips2023,bevanda2025koopmanequivariant} for autonomous systems. Nevertheless, their control system counterparts do not enjoy a comparable level of understanding.
For example, Hilbert space Koopman operators for control systems are not defined to full generality in existing literature, often requiring restrictive arguments involving specific generator discretizations \cite{Philipp2023b} or state inflation \cite{KORDA2018149}. This impedes a connection to strong approximation results and learning in a flexible nonparametric (dictionary-free) manner. 

A major technical reason for the aforementioned theoretical gap can also be traced back to a rather practical desire for finite-dimensional models. However, early discretization of the learning problem inspired by finite-section methods, e.g., by using an explicit and fixed feature or input dependence \cite{Proctor2018,Proctor2016,Kaiser2021,NuskeRFF,Nuske2023}, leads to a systematic loss of precision and inefficient exploitation of the data \cite{Khosravi2023}. The use of data-independent finite-dimensional subspace is especially ill-suited when dealing with unknown large-scale systems that require a suitably large/rich feature or input space. As summarized in Table \ref{tab:comp}, existing data-driven operator-theoretic control system models do not enable input and output spaces to be jointly infinite-dimensional and require multiple operator regressions. 
\subsubsection*{Our approach}
To alleviate the theoretical and practical limitations mentioned above, we formalize a novel dictionary-free learning approach using reproducing kernel Hilbert spaces. We connect infinite-dimensional regression with composition operators of control systems to provide a rigorous and self-contained nonparametric learning framework. This turns out to be crucial to get hold of both the approximation error as well as avoiding explicit tensor products of the dictionary and the control inputs \cite{otto2023learning,STRASSER20232257}, leading to arbitrarily accurate operator approximation.
As our numerical experiments confirm, this has strong implications: our nonparametric approach significantly outperforms classically used (bilinear) EDMD approaches, which commonly aim at approximating finite-element methods from data \cite{KORDA2018149,Zhang2022QunatiAnal}. Moreover, the approach we propose does not rely on carefully crafted datasets or multiple regressions that can be found in existing works, such as \cite{PEITZ2019184,Nuske2023,Philipp2023b}.
\subsubsection*{Contribution} To connect to the existing body of work with bilinear representations, we first derive tensor product RKHSs (Theorem \ref{thm:CAkern}, Remark \ref{rmk:infU}), establishing equivalent operator-theoretic models (Corollary \ref{cor:LPV}).
By connecting regression risk (Lemma \ref{lem:riskBound}) to operator norm error and proving arbitrarily accurate control system operator approximation under minimal assumptions (Theorem \ref{thm:opNorm}), guarantees that any RKHS observable admits arbitrarily accurate prediction using finite-rank operators (Corollary \ref{coro:Opred}). Given a dataset, we turn finite-rank infinite-dimensional operators to closed-form finite-dimensional predictors (Proposition \ref{prop:cKORkrrSHORT} and \ref{prop:NYcKORkrrSHORT}).
Finally, we statistically confirm the advantage of our nonparametric framework on various prediction tasks and show that our models can be directly used in computationally efficient iterated LPV model predictive control (MPC) methods.
\subsubsection*{Notation}
For non-negative integers $n$ and $m$, $[m,n]=\{m,\dots,n\}$ with $n\geq m$ gives an interval set of integers. We use the shorthand $[n]{\defeq}[1,n]$. Given two separable Hilbert spaces $\mathcal{F}$ and $\mathcal{Y}$, we let $\HS{\mathcal{F},\mathcal{Y}}$ be a Hilbert space of Hilbert-Schmidt (HS) operators from $\mathcal{F}$ to $\mathcal{Y}$ endowed with the norm $\|{A}\|_{\HS{\mathcal{F},\mathcal{Y}}}^2\equiv \lilsum_{i\in\Set{N}}\norm{Ae_i}^2_{\mathcal{Y}}=\mathrm{Tr}(A^*A)$, where $\{e_i\}_{i\in\Set{N}}$ is an orthonormal basis of ${\mathcal{F}}$. For HS-operators from $\mathcal{F}$ to itself, we use the shorthand $\HS{\mathcal{F}}$. The operator norm of a linear operator ${\mathcal{G}}:\mathcal{F} \to \mathcal{Y}$ is denoted as $\|{\mathcal{G}}\|_{\mathrm{op}}\defeq \sup _{\|f\|_{\mathcal{F}}=1}\|{\mathcal{G}} f\|_{\mathcal{Y}}$, with image $\operatorname{Im}(\mathcal{G})\defeq\mathcal{G}(\mathcal{F})$ and $\operatorname{cl}(\cdot)$ its norm closure. ${\mathcal Y} \otimes {\mathcal F}$ denotes the Hilbert‑space completion of the algebraic tensor product ${\mathcal Y} \otimes_{\mathrm{alg}} {\mathcal F}$ with inner product
$\langle y_1 \otimes f_1,\; y_2 \otimes f_2\rangle
   = \langle f_1,f_2\rangle_{\mathcal F}\,
     \langle y_1,y_2\rangle_{\mathcal Y}$, and we regard  $y \otimes f$ as elementary tensors of this Hilbert‑space tensor product. A direct sum of Hilbert spaces $\mathcal{Y}$ and $\mathcal{F}$ is denoted as $\mathcal{Y}\oplus \mathcal{F}$.
For simplicity, adjoints of operators as well as (conjugate) transposes of matrices are denoted as $\adjoint{(\cdot)}$. 
Lower/upper case symbols denote functions/operators, bold symbols are reserved for matrices and vectors, and $\odot$ denotes the Hadamard product.
The space of square-integrable functions is denoted as $L_{{{\mu}}}^2(\cdot)$ with an appropriate Lebesgue measure ${{\mu}}$ while the vector space of bounded continuous functions is denoted by ${C
}_0(\cdot)$. On a domain $\Set{X}$ let $k_X: \Set{X} \times \Set{X} \rightarrow \Set{R}$ be a symmetric and positive definite kernel function and $\RKHS_X$ the corresponding RKHS \cite{IngoSteinwart2008SupportMachines}, with norm denoted as $\|\cdot\|_{\RKHS_X}=\sqrt{\langle \cdot,\cdot\rangle_{\RKHS_X}}$.  For $\bm{x}\in\Set{X}$, ${\phi}_{X}(\bm{x}) \equiv k_X(\cdot, \bm{x}) \in \RKHS_X$ denotes the canonical feature map.
\subsubsection*{Organization} After the preliminaries, Section \ref{sec:InfProb} states the infinite‑dimensional learning problem we intend to address. Section \ref{sec:InfSolution} solves it at the \textit{population} (infinite-data) level and links it to various kernel choices. Then, in Section \ref{sec:cKOR}, finite‑sample predictors are derived, while Section \ref{sec:Sketch} adds Nystr\"{o}m variants to reduce the computational load. Section \ref{sec:POD} provides efficient methods for state-dimension reduction, while in Section \ref{sec:NumEx} the numerical validation of our proposed approach is provided. Finally,  while Section \ref{sec:Concl} and appendices provide conclusions and additional details, respectively.
\section{Preliminaries and Problem Statement}\label{sec:probStat}
To put our developments into perspective, we introduce two relevant and distinct modeling paradigms.
\subsection{Modeling Paradigms}
First, we outline the classical state-space modeling paradigm, ubiquitous in control systems theory. 
\subsubsection*{State-Space} Consider a nonlinear system 
\begin{equation}\label{eq:ncs}
    \bm{x}_{k{+}1}=\bm{\mathsf{f}}(\bm{x}_{k},\bm{u}_{k})  \tag{\txt{\textsc{ncs}}}
\end{equation}
where $\bm{x}_{k}\in \Set{X} \subset \Set{R}^{n_{{x}}}$ is the state and $\bm{u}_{k} \defeq [u_1({k})\cdots u_{n_{{u}}}({k})]^\top \in  \Set{U} \subset \Set{R}^{n_{{u}}}$ denotes the control variable and $k\in\Set{N}_0^+$ is the discrete time. Throughout, we consider $\bm{\mathsf{f}}$ to be continuously differentiable and the sets $\Set{X}$ and $\Set{U}$ to be compact, ensuring global Lipschitz continuity on $\Set{X}\times\Set{U}$.
\subsubsection*{Operator-Theoretic}
The existence of \eqref{eq:ncs} allows one to capture more than the state-space dynamics, i.e., to {describe the dynamics of Hilbert-space-valued functions} (\textit{observables}) over the system \eqref{eq:ncs} via a \textit{linear operator} between \textit{separable} Hilbert spaces.
The structure of this operator and of the spaces it acts between is, however, not canonical: the operator-theoretic Koopman-based control literature \cite{Goswami2017,otto2023learning,Otto2022LearningTrajectories,Goswami2022BilinearizationApproach,Bruder2021} has predominantly settled on a specific, bilinear, tensor-product assumption. In particular, inspired by the widespread control-affine special case $\bm{\mathsf{f}}(\bm{x},\bm{u}) = \bm{f}(\bm{x}) + \lilsum^{n_{{u}}}_{j=1} \bm{g}_j(\bm{x})\, u_j$ \cite{Nijmeijer96}, a linear operator model ${\mathcal{G}}: \RKHS^\prime \to \RKHS$ is considered, obeying
\begin{align}\label{eq:dualBilin}
y_{k+1}(\bm x_k,\bm u_k)
\;\defeq\;
\langle{ \mathcal G\,y_k},
  { \concat{1}{\bm u_k}\otimes \phi(\bm x_k)}\rangle_{\mathcal H} \tag{\txt{\textsc{bts}}}
\end{align}
for any observable $y \in \RKHS^\prime$, where the basis $\concat{1}{\bm{u}}\otimes \phi (\bm{x}) \in \RKHS$ and ${\phi}(\bm{x}) \in \RKHS^\prime$ span $\RKHS$ and $\RKHS^\prime$, respectively\footnote{Any bounded bilinear control system on a Hilbert space can be rewritten as a linear one on the tensor product space \cite{Isidori_Ruberti_1973,Elliott_2009}.}.
We generalize beyond this bilinear assumption to arbitrary square-integrable functions of state and control, where the operator \textit{maps between fundamentally different spaces}, contrasting the classical Koopman operator setting.
 \begin{definition}\label{def:cko}
 The \textit{Control Koopman Operator} is a linear operator ${\mathcal{G}}: \truIN(\Set{X})\to \truOUT(\Set{X} {\times} {\Set{U}})$ defined by the composition
     \begin{align}\label{eq:cKOp}
   [{\mathcal{G}}y](\bm{x},\bm{u}) = y \circ \bm{\mathsf{f}}(\bm{x},\bm{u}) \tag{\txt{\textsc{cko}}},
\end{align}
of an observable $y \in \truIN$ with \eqref{eq:ncs} for all $(\bm{x},\bm{u}) \in  \Set{X} \times \Set{U}$.
 \end{definition}
\subsection{Linear-in-Parameter Regression}
\subsubsection*{Function-to-Vector}
An established problem in data-driven system identification is the regression of \eqref{eq:ncs} from observing tuples $\{(\bm{x}_k,\bm{u}_k), \bm{x}_{k+1}\}$ on $\Set{Z}{\times}\Set{X}$ with $\Set{Z}\defeq\Set{X}{\times}\Set{U}$, coming from some unknown data distribution $\bm{z}_k\defeq(\bm{x}_k,\bm{u}_k) \sim \nu$ in the form of snapshots
\begin{equation}\label{eq:SRdataset_snapshot_pairs}
    \bigl\{\phi_{Z}(\bm{z}^{(i)}) \defeq \phi_{Z}(\bm{z}^{(i)}_{k})~,~~\bm{x}_{+}^{(i)} \defeq \bm{x}^{(i)}_{k{+}1}\bigr\}_{i{\in}[n]},
\end{equation}
with the real-valued hypothesis $\spOUT=\{\phi_{Z}(\bm{z})\}$ a separable (possibly infinite-dimensional) Hilbert space.
Classically, assuming the hypothesis $\RKHS$ provides a suitably rich parameterization to represent the unknown mapping $\bm{\mathsf{f}}$ \eqref{eq:ncs}, the identification problem is well-specified and usually (assuming existence) follows from scalar-valued \textit{risk minimizations}
\begin{align}\label{eq:scalar-regression}
  \Bigl\{\min_{\theta \in \spOUT}\mathbb{E}\left[\|\bm{\mathsf{f}}_d(\bm{z})-\innerprod{\theta_d}{\phi_{Z}(\bm{z})}_{\spIN}\|_{\Set{R}}^2\right]\Bigr\}_{d \in [n_{x}]},
\end{align}
for individual components of the state $d\in[n_{x}]$ \cite{Umlauft2017,castañeda2021gaussian,Lederer2021TheControl}.
\subsubsection*{Function-to-Function}
Contrasting the classical system identification problem for the dynamics in state-space $\Set{X}$, we will ``lift" the target label $\bm{x}_+ \mapsto {\phi}_{X}(\bm{x}_+)$ to an auxiliary hypothesis space $\spIN=\{{\phi}_{X}(\bm{x}_+)\}$, turning \eqref{eq:SRdataset_snapshot_pairs} into
\begin{equation}\label{eq:dataset_snapshot_pairs}
    \bigl\{\phi_{Z}(\bm{z}^{(i)}),{\phi}_{X}(\bm{x}_{+}^{(i)}) \bigr\}_{i{\in}[n]}. 
\end{equation}
Our objective in this work is to learn a function-to-function -- \textit{operator} -- mapping \cite{Steffen2012}
\begin{align}\label{eq:OpRegFeature}
  \min_{G \in \HS{\RKHS_X,\RKHS_Z}}~\mathbb{E}\left[\|{\phi}_{X}(\bm{\mathsf{f}}(\bm{z}))-G^*\phi_{Z}(\bm{z})\|_{\spIN}^2\right].
  \tag{\txt{\textsc{f-f}}}
\end{align}
The appeal of operator-based modeling comes from the fact that once a finite-rank operator is obtained, any function that belongs to $\spIN$ can be predicted. To treat both \eqref{eq:scalar-regression} and \eqref{eq:OpRegFeature} on equal footing, we recognize that the set of scalar-valued regressions of \eqref{eq:scalar-regression} equals
\begin{align}\label{eq:finite-regression}
\tag{\text{\textsf{\textsc{f-v}}}}
    \min_{\theta \in \HS{\Set{R}^{n_{x}},\spOUT}}~\mathbb{E}\left[\|\bm{\mathsf{f}}(\bm{z})-{\theta^*}{\phi_{Z}(\bm{z})}\|_{\Set{R}^{n_{x}}}^2\right],
\end{align}
by the natural isometric isomorphism $ (\theta_1,\dots, \theta_{n_{x}}) =  \theta \in (\spOUT)^{n_{x}} \cong \HS{\Set{R}^{n_{x}},\spOUT} $ \cite{aubin2011applied}.
Therefore \eqref{eq:OpRegFeature} can be seen as a direct generalization of \eqref{eq:finite-regression} to possibly infinite-dimensional targets spanning $\spIN$, in turn, allowing the prediction of any $y \in \spIN$ at inference.
\subsection{Problem Statement}
We aim to address the problem of learning the control Koopman operator ${\mathcal{G}}$ from Definition \ref{def:cko} when observed through the samples of its action on a hypothesis space. Connecting to feature-to-feature regression, we formulate the operator-theoretic equivalent of \eqref{eq:OpRegFeature} for \eqref{eq:cKOp} using its \textit{restriction ${\mathcal{G}}{{\mid_\spIN}}$ to a reproducing kernel Hilbert space (RKHS)} where the approximation
\begin{enumerate}
    \item Follows from infinite-dimensional regression;
    \item Avoids the curse of input dimensionality;
    \item Allows for control-uniform prediction of observables;
    \item Enables with well-posed estimation schemes;
    \item Rigorously derive finite-dimensional models;
    \item Scales to large datasets using sketching \cite{Williams2000,ahmad2023sketch}.
\end{enumerate}
\preSection
\section{Operator Learning in Infinite-Dimensions}\label{sec:InfProb}
Our approach is rooted in the well-studied theory of infinite-dimensional regression in (reproducing kernel) Hilbert spaces, leading to an RKHS-valued regression problem for learning control Koopman operators in a flexible yet principled manner. 
\subsubsection*{RKHSs as subspaces} We consider RKHSs $\spIN$/$\spOUT$ that are a subset of $\truIN$/$\truOUT$-integrable functions \cite[Chapter 4.3]{IngoSteinwart2008SupportMachines} with associated canonical feature maps ${\phi}_{X}: \Set{X} \rightarrow \spIN$ and ${\phi_{Z}}: \Set{Z} \rightarrow \spOUT$, so that we can approximate the \eqref{eq:cKOp} \textit{restriction} 
\begin{subequations}
    \begin{equation}
        \mathcal{T}\defeq {\mathcal{G}}{_{\mid_\RKHS}}: \spIN\to \truOUT(\Set{Z}) \qquad \approx \qquad G: \spIN \rightarrow \spOUT
    \end{equation}
\begin{equation}
    \begin{tikzcd}[row sep=3em,column sep=5em]
     \truIN \arrow[r, "{\mathcal{G}}", ] 
    \arrow[d, leftarrow, "{S}_{\mu^{\prime}}"']
    & |[]| \truOUT  \arrow[thick,d, leftarrow, "{S}_{\nu}"] 
    \\
    |[]| \spIN \arrow[thick,r, "G"] \arrow[thick, ru, "\mathcal{T}", above, shift left]
    & \spOUT
    \end{tikzcd}
\end{equation} 
\end{subequations}
where we account for the norm discrepancy between square-integrable and RKHS functions using \textit{inclusions} 
\begin{subequations}\label{eq:incl}
       \begin{align}
S_{\nu}:\spOUT\hookrightarrow\truOUT\quad\mathrm{s.t.}\quad & \spOUT \ni g \mapsto [g]_{\sim} \in \truOUT,\\
{S}_{\mu^{\prime}}:\spIN\hookrightarrow\truIN\quad\mathrm{s.t.}\quad & \spIN \ni f \mapsto [f]_{\sim} \in \truIN,
\end{align}  
\end{subequations}
which transform RKHS elements to their point-wise equal equivalence class $[\cdot]_{\sim}$ endowed with an appropriate $L^2_{({\cdot})}$-norm. Moreover, the inclusions admit adjoints
\begin{subequations}
           \begin{align}
           S^*_\nu: \truOUT \ni    g & \quad \mapsto \quad \Set{E}_{\bm{z}\sim \nu}\bigl[g(\bm{z}) \phi_{Z}({\bm{z}})\bigr]  \in \spOUT, \\
           S^*_{{\mu^{\prime}}}: \truIN \ni  f & \quad \mapsto \quad  \Set{E}_{\bm{x} \sim \mu^{\prime}}\bigl[f(\bm{x}) {\phi}_{X}(\bm{x})\bigr]\in \spIN.
       \end{align}
\end{subequations}
 \begin{assumption}\label{asm:Bcko}
     To ensure well-posedness, we require that \eqref{eq:cKOp} is bounded, i.e., $\exists \txt{L},~\mathrm{s.t.}~\| \mathcal{G}f \|_{\truOUT} \le \txt{L} \, \| f \|_{\truIN}$.
 \end{assumption}
\begin{remark}\label{rmk:Bound}
For Lipschitz unforced dynamics \(\bm{f}(\bm{x})\equiv\bm{\mathsf{f}}(\cdot,\bm{0})\), the above assumption is readily satisfied as they are locally invertible $\abs{\mathrm{det}(\nabla\bm{{f}}(\bm{x}))} \neq 0$ (singularity-free) almost-everywhere, describing the transient behavior of large classes of cyber-physical systems \cite{KKR_neurips2023,bevanda2025koopmanequivariant}. This extends to \eqref{eq:ncs} though Lipschitz continuity and the compactness of $\mathbb{X}$, $\mathbb{U}$, allowing a bounded \textit{measure distortion} $\nu(\{{\bm{z}}\in {\mathbb{X}\times\mathbb{U}}:\bm{\mathsf{f}}(\bm{z}) \in \mathbb{A}\})\le \txt{L}^2{\mu^{\prime}}(\mathbb{A})$ for any Borel set ${\mathbb{A}} \subseteq \mathbb{X}$, in turn, satisfying Assumption \ref{asm:Bcko}.
\end{remark}
To avoid being caught up in measurability and integrability issues, we impose the following technical assumptions.
\begin{assumption}
\label{asm::RKHS}
$\spIN,\spOUT$ are separable, and their kernel functions continuous and ${\mu^{\prime}}$- and $\nu$-a.e. bounded, respectively.
\end{assumption}
The above assumption is not restrictive, as, e.g., Gaussian, Laplacian, or Matérn kernels \cite{IngoSteinwart2008SupportMachines} satisfy all of the above assumptions over Euclidean domains \cite{li2022optimal}. 
Moreover, is well-established that, under Assumption \ref{asm::RKHS}, the inclusion operators \eqref{eq:incl} are Hilbert-Schmidt (HS) \cite[Chapter 4.3]{IngoSteinwart2008SupportMachines}, so 
$\mathcal{T} \in \HS{\spIN,\truOUT}$ follows by Assumption  \ref{asm:Bcko} \cite{Mollenhauer2020}.
\begin{figure*}[t!]
  \centering
   \resizebox{.7\textwidth}{!}{%
\begin{tikzpicture}[thick, node distance=5cm,auto]
    \node [blk] (x) {{\textsf {Learning}} \vspace{0.5em}\\$\adjoint{G}(\concat{1}{\bm{u}} \otimes {\phi}_{X}(\bm{x}))$};     %
    \node [blk] (c) [right of=x] {{\textsf {Simulation}} \vspace{0.5em}\\$(A^*+M^*({\bm{u}})){\phi}_{X}(\bm{x})$};
    \node [blk] (y) [right of=c] {{\textsf {Predictive Control}} \vspace{0.5em}\\$A^*{\phi}_{X}(\bm{x}){+}B^*_{{\phi}_{X}(\bm{x})}{\bm{u}}$};
    \path[<->] (x) edge node [pos=.5, above, scale = .8] {}  (c); %
    \path[<->] (c) edge node [pos=.5, above, scale = .8] {}  (y); %
\end{tikzpicture}
}
   \caption{
    Equivalent control operator-induced RKHS embeddings of the system dynamics for various tasks for the \textit{control-affine} range $\RKHS_Z$.}
  \label{fig:Equiv}
  \postFloat
\end{figure*}

\subsubsection*{Operator $\cong$ feature-to-feature regression} 
To learn $G$, we can define the feature-to-feature risk \eqref{eq:OpRegFeature}
\begin{align}\label{eq:risk}
    \Set{E}_{\bm{z} \sim \nu} \!\!\left[{\norm{{\phi}_{X}({\bm{\mathsf{f}}(\bm{z})})- G^{*} \phi_{Z}({\bm{z}})}^{2}_{{\spIN}}}\right],
\end{align}
describing a mean square error of a linear map ${G}^*$ from $\phi_{Z}({\bm{z}}) \in \spOUT$ to ${\phi}_{X}({\bm{\mathsf{f}}(\bm{z})}) \in \spIN$ that can be interpreted as the \textit{embedding} of \eqref{eq:cKOp} in RKHS
\begin{align}
    \left[\mathcal{T} h\right](\bm{z}) =\innerprod{h}{\mathcal{T}^*S_{\nu}\phi_{Z}({\bm{z}})}_{\spIN} =\langle h,{\phi}_{X}({\bm{\mathsf{f}}(\bm{z})}\rangle_{\spIN}.
\end{align}
By Assumptions 1 and 2, the above expression is well-defined. 
An application of Tonelli's theorem to swap orders then gives that $\Set{E}[\sum_i\langle h_i,{\phi}_{X}({\bm{\mathsf{f}}(\bm{z})}){-}G^*\phi_{Z}({\bm{z}})\rangle^2]$ equals $\sum_i \Set{E}[\langle h_i,{\phi}_{X}({\bm{\mathsf{f}}(\bm{z})}){-}G^*\phi_{Z}({\bm{z}})\rangle^2]$ for an orthonormal basis $\{h_i\}_{i \in \Set{N}}$ of $\spIN$. After using the Hilbert-Schmidt norm definition, it becomes apparent that \eqref{eq:risk} amounts to
    \begin{align}\label{eq:HSrisk}
  \mathsfit{R}(G) \defeq \left\|\mathcal{G}{-} G\right\|_{\mathrm{HS}(\spIN{,} {L}_{\nu}^2)}^2 \equiv \left\|\mathcal{T}{-}S_{\nu} G\right\|_{\mathrm{HS}}^2, \tag{\txt{\textsc{risk}}}
\end{align}
leading to the operator-theoretic risk minimization
\begin{align}\label{eq:MSEriskmin}
   \min_{G \in \HS{\spIN,\spOUT}} \mathsfit{R}(G) \equiv \left\|\mathcal{T}{-}S_{\nu} G\right\|_{\mathrm{HS}}^2,\tag{\txt{\textsc{ckor}}}
\end{align}
we refer to as \textit{control Koopman operator regression} (cKOR).
By the Pythagorean theorem, \eqref{eq:HSrisk} equals
\begin{align}\label{eq:riskEco}
 \underbrace{\norm{P_{\spOUT}\mathcal{G}{-}{G}}^{2}_{\mathrm{HS}(\spIN{,}{L}_{\nu}^2)}}_{\text{projected risk}}{+}\underbrace{\norm{[I{-}P_{\spOUT}]\mathcal{G}}^{2}_{\mathrm{HS}(\spIN{,}{L}_{\nu}^2)}}_{\text{representation risk}}, 
\end{align}
where $P_{\spOUT}$ is the orthogonal projector in $\truOUT$ onto $\spOUT$. The above decomposition is classical in learning theory \cite{IngoSteinwart2008SupportMachines, Bach2023LearningPrinciples} and operator regression \cite{Mollenhauer2020,Mollenhauer2022,kostic2022learning,kostic2023sharp}. While the \textit{projected risk} depends on the learned $G$, mitigating the $\spOUT$-dependent, \textit{representation risk} is crucial for achieving statistical consistency. Luckily, solving \eqref{eq:MSEriskmin} using suitably defined infinite-dimensional RKHSs \cite{IngoSteinwart2008SupportMachines}, the \textit{representation risk} can vanish, enabling arbitrary accurate learning and, in turn, prediction of $\spIN$-observables under control inputs. Moreover, any continuous observables can be represented to arbitrary accuracy by their embedding in universal RKHSs, covering a large class of observables whose evolution can be approximated while providing a surrogate model for the state-space dynamics \eqref{eq:ncs} as a special case.
\section{Nonparametric Control Koopman Operator Approximations}\label{sec:InfSolution}
The \textit{flexibility} of a nonparametric approach comes from the fact that, given a universal kernel, neither the feature map nor the feature space is uniquely determined -- defining a dictionary-free approach \cite{IngoSteinwart2008SupportMachines}. Nonetheless, a \textit{reproducing kernel Hilbert space} (RKHS) uniquely defines a kernel (and vice versa) \cite{IngoSteinwart2008SupportMachines}, so we impose the structure of \eqref{eq:dualBilin} on the feature space of an RKHS to derive a kernel that corresponds to it. As it turns out, such a kernel is crucial for practically working with infinite-dimensional spaces and getting a hold on the approximation error of the evolution of observables in $\spIN$. In the conditional expectation operator setting, it is established that working in infinite-dimensional RKHSs has various benefits, e.g., overcoming the weak convergence results \cite{KORDA2018149,Peitz2020} of finite-section methods, cf. Mollenhauer and Koltai \cite{Mollenhauer2020} for a discussion. In the context of linear operator learning for control systems, a similarly flexible framework is missing, which we propose here.
\subsection{Reproducing kernel Hilbert space representations}
By the structure of \eqref{eq:dualBilin}, tensor product spaces \cite{aubin2011applied,Mollenhauer2022} are of particular importance to endow the Hilbert space-valued Koopman operators with control effects in a principled manner. 
\subsubsection*{Hilbert tensor products}
For $y \in \mathcal{Y}$ and $f \in \mathcal{F}$, the bounded operator $y \otimes f \in \mathcal{L}(\mathcal{F}, \mathcal{Y})$ is the \emph{rank-one operator}
\begin{equation}\label{eq:rank-one_operator}
 \mathcal{F} \ni h \mapsto [y \otimes f](h) \defeq\innerprod{f}{ h }_{\mathcal{F}}y \in \mathcal{Y} .
\end{equation}
The Hilbert tensor (outer) product $\mathcal{Y} \otimes \mathcal{F}$ is defined to be the completion of the linear span of all such
rank-one operators with respect to the inner product 
\begin{align}\label{eq:TPrule}
    \innerprod{ y \otimes f }{ f^\prime \otimes y^\prime }_{\mathcal{Y} \otimes \mathcal{F}} \defeq\innerprod{ f}{ f^\prime }_{\mathcal{F}} \innerprod{ y }{ y^\prime }_{\mathcal{Y}}.
\end{align}
We will interchangeably use the isometric isomorphisms $\HS{\mathcal{F}, \mathcal{Y}} {\cong} \mathcal{Y} \otimes \mathcal{F}$ and $L^{2}_{{{\mu^{\prime}}}}(\mathcal{Y}) {\cong} L^{2}_{{{\mu^{\prime}}}}( \Set{R}) \otimes \mathcal{Y}$ \cite[Chapter~12]{aubin2011applied}, and treat such spaces as essentially identical.
\subsubsection*{Kernel functions} To appreciate the above tensor product construction, we start by recognizing that the dynamics \eqref{eq:dualBilin} satisfy the following pairing for any ${y} \in \truIN$
\begin{align}\label{eq:TensDyn}
    \innerprod{y}{{{\mathcal{G}}}^*(\concat{1}{\bm{u}_{k}} \otimes \phi (\bm{x}_{k}))}_\truIN   = \innerprod{{{\mathcal{G}}}y}{\concat{1}{\bm{u}_{k}} \otimes \phi (\bm{x}_{k})}_\truOUT
\end{align}
for a bounded \eqref{eq:cKOp}, revealing that the image space of ${\mathcal{G}}$ is essentially that of vector-valued $\concat{1}{\bm{u}_{k}}$-affine $\truOUT$-functions, helping us uniquely define the image RKHS $\spOUT$ -- equivalently kernel $k_Z: \Set{Z} \times \Set{Z} \mapsto \Set{R}$ --  for our hypothesis $G: \spIN \rightarrow \spOUT$.
\begin{theorem}[Control-affine kernel]\label{thm:CAkern}
Let $\spIN$ be a separable RKHS with corresponding kernel $k_X: \Set{X}\times\Set{X} \mapsto \Set{R}$ and $\bm{v}\defeq\concat{{1}}{\bm{u}} \in \Set{V} \subseteq \Set{R}^{n_u\!{+}\!1}$. Then, the completion of $\bm{v}$-affine functions is the $\Set{V} \otimes \spIN$, which is defined by the kernel
    \begin{align}\label{eq:CAkern}
    {k}^{\textsf{ca}}_Z(\bm{z},\bm{z}^\prime) ={k}_X(\bm{x}^\prime,\bm{x})(1{+}\innerprod{\bm{u}}{\bm{u}^\prime})
\end{align}
that corresponds to RKHS $\spOUT$ and equivalently defines $\concat{1}{\bm{u}}$-affine $\spIN$-valued observables via the operator-valued kernel $K_{X}(\bm{x},\bm{x}^\prime)\defeq k_{X}(\bm{x},\bm{x}^\prime)\mathrm{Id}_{\Set{V}} \in \mathcal{L}(\Set{V})$, where $\mathcal{L}(\Set{V})$ is the set of bounded operators from $\Set{V}$ to itself.

\begin{proof}
Let $V\defeq K_{X}(\cdot,\bm{x}) \bm{v} ,V^{\prime}\defeq K_{X}(\cdot,\bm{x}^\prime) \bm{v}^\prime$ belong to vector-valued RKHS $\mathscr{G}$ of $ \bm{v}$-affine functions defined as
\begin{align}\label{eq:spanCA}
    \mathscr{G}=\overline{\mathrm{span}\{K_{X}(\cdot,\bm{x}) \bm{v}\mid \bm{v} \in \Set{V}, \bm{x}\in\Set{X}\}}^{\|\cdot\|_{\mathscr{G}}}
\end{align}
Thus, we have $\left\langle V, V^{\prime}\right\rangle_{\mathscr{G}}=\innerprod{ K_{X}(\cdot,\bm{x}) \bm{v}}{K_{X}(\cdot,\bm{x}^\prime) \bm{v}^\prime}_{\mathscr{G}}= \innerprod{  K_{X}^*(\cdot,\bm{x}^\prime)K_{X}(\cdot,\bm{x}) \bm{v}}{\bm{v}^\prime}_{\Set{V}}
=\innerprod{ k_{X}(\bm{x}^\prime,\bm{x}) \bm{v}}{\bm{v}^\prime}_{\Set{V}}= \innerprod{ \bm{v}^\prime \otimes  {\phi}_{X}(\bm{x}^\prime) }{\bm{v} \otimes  {\phi}_{X}(\bm{x})}_{\Set{V} \otimes \spIN}
=\innerprod{ {\phi}_{X}(\bm{x}^\prime) }{ {\phi}_{X}(\bm{x}) }_{\spIN}\innerprod{\bm{v}}{ \bm{v}^\prime}_{\Set{V}} =k^{\textsf{ca}}_Z(\bm{z},\bm{z}^\prime)$.
\end{proof}
\end{theorem}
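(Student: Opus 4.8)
The plan is to realize the asserted space as a vector‑valued RKHS, identify it with the Hilbert tensor product $\Set{V}\otimes\spIN$, and then scalarize to read off the reproducing kernel $k^{\textsf{ca}}_Z$. First I would set $K_X(\bm{x},\bm{x}')\defeq k_X(\bm{x},\bm{x}')\,\mathrm{Id}_{\Set{V}}$, with $\Set{V}$ viewed as the Euclidean space $\Set{R}^{n_u+1}$ in which the vectors $\concat{1}{\bm{u}}$ live, and observe that $K_X$ is a positive‑definite $\mathcal{L}(\Set{V})$‑valued kernel, being the product of the positive‑definite scalar kernel $k_X$ and $\mathrm{Id}_{\Set{V}}\succeq 0$. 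It therefore generates a vector‑valued RKHS $\mathscr{G}$, namely the closure of $\mathrm{span}\{K_X(\cdot,\bm{x})\bm{v}:\bm{x}\in\Set{X},\bm{v}\in\Set{V}\}$ under the inner product pinned down by the reproducing identity $\innerprod{F}{K_X(\cdot,\bm{x})\bm{v}}_{\mathscr{G}}=\innerprod{F(\bm{x})}{\bm{v}}_{\Set{V}}$. Each $F\in\mathscr{G}$ is a map $\bm{x}\mapsto F(\bm{x})\in\Set{V}$, and pairing it with $\bm{v}=\concat{1}{\bm{u}}$ yields the scalar function $\bm{z}\mapsto\innerprod{F(\bm{x})}{\concat{1}{\bm{u}}}$, which is exactly a $\bm{v}$‑affine function $g_0(\bm{x})+\sum_j g_j(\bm{x})u_j$ with components $g_j\in\spIN$; conversely every such function arises this way, so after scalarization $\mathscr{G}$ is precisely the completion of $\bm{v}$‑affine functions claimed in the statement.

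The one real computation is the inner product of two generating atoms, which I would carry out as $\innerprod{K_X(\cdot,\bm{x})\bm{v}}{K_X(\cdot,\bm{x}')\bm{v}'}_{\mathscr{G}}=\innerprod{K_X(\bm{x}',\bm{x})\bm{v}}{\bm{v}'}_{\Set{V}}=k_X(\bm{x}',\bm{x})\,\innerprod{\bm{v}}{\bm{v}'}_{\Set{V}}$, using the reproducing identity and the definition of $K_X$. Comparing with the tensor‑product rule \eqref{eq:TPrule}, which gives $\innerprod{\bm{v}\otimes{\phi}_{X}(\bm{x})}{\bm{v}'\otimes{\phi}_{X}(\bm{x}')}_{\Set{V}\otimes\spIN}=k_X(\bm{x},\bm{x}')\innerprod{\bm{v}}{\bm{v}'}_{\Set{V}}$, and invoking symmetry $k_X(\bm{x},\bm{x}')=k_X(\bm{x}',\bm{x})$, the densely defined map $\bm{v}\otimes{\phi}_{X}(\bm{x})\mapsto K_X(\cdot,\bm{x})\bm{v}$ is an isometry on elementary tensors, hence extends to a unitary $\Set{V}\otimes\spIN\cong\mathscr{G}$. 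Scalarizing as above, the canonical feature of the resulting scalar RKHS $\spOUT$ at $\bm{z}=(\bm{x},\bm{u})$ is $K_X(\cdot,\bm{x})\concat{1}{\bm{u}}$, so its reproducing kernel is $k^{\textsf{ca}}_Z(\bm{z},\bm{z}')=\innerprod{K_X(\cdot,\bm{x})\concat{1}{\bm{u}}}{K_X(\cdot,\bm{x}')\concat{1}{\bm{u}'}}_{\mathscr{G}}=k_X(\bm{x}',\bm{x})\,\innerprod{\concat{1}{\bm{u}}}{\concat{1}{\bm{u}'}}=k_X(\bm{x}',\bm{x})(1+\innerprod{\bm{u}}{\bm{u}'})$, which is \eqref{eq:CAkern}. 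Since an RKHS is uniquely determined by its kernel, this settles the operator‑valued and the scalar descriptions at once.

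The main obstacle I anticipate is not a hard estimate but the bookkeeping around three completions: checking that the span of generating atoms is dense in each of $\mathscr{G}$, $\Set{V}\otimes\spIN$, and $\spOUT$, and that the isometry defined on atoms extends unambiguously to the closures — which rests on $K_X$ being a bona fide positive‑definite operator‑valued kernel so that the reproducing identity determines a unique inner product. The only place where conventions could bite is tracking which slot of $k_X$ picks up the adjoint in the product $K_X^*(\cdot,\bm{x}')K_X(\cdot,\bm{x})$, but for a real symmetric scalar kernel this is immaterial.
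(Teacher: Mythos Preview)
Your proposal is correct and follows essentially the same approach as the paper: both proofs introduce the operator-valued kernel $K_X(\bm{x},\bm{x}')=k_X(\bm{x},\bm{x}')\mathrm{Id}_{\Set{V}}$, compute the inner product of two generating atoms $K_X(\cdot,\bm{x})\bm{v}$ via the vector-valued reproducing identity to obtain $k_X(\bm{x}',\bm{x})\langle\bm{v},\bm{v}'\rangle_{\Set{V}}$, and identify this with the tensor-product inner product on $\Set{V}\otimes\spIN$ to read off $k^{\textsf{ca}}_Z$. Your write-up is more explicit about the scalarization step and the density/extension bookkeeping, but the underlying argument is the same.
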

          The structured kernel \eqref{eq:CAkern} should not be confused with parametric models that work with a finite set of observables. Namely, using a $C_0$-universal $\spIN$ guarantees that \eqref{eq:CAkern} induces an RKHS dense in the space of bounded continuous as well as square-integrable control-affine functions \cite{IngoSteinwart2008SupportMachines}. 
  \begin{remark}[Beyond control-affine $\RKHS_Z$]\label{rmk:infU}
  Theorem \ref{thm:CAkern} and subsequent results immediately hold for infinite-dimensional control kernels in place of $\innerprod{\bm{u}}{\bm{u}^\prime}$, e.g.,  by changing the kernel to ${k}_Z(\bm{z},\bm{z}^\prime)  ={k}_X(\bm{x}^\prime,\bm{x})(1{+}k_{U}({{\bm{u}},{\bm{u}^\prime}}))$  where $k_U({\bm{u}},{\bm{u}^\prime})$ may be an infinite-dimensional RKHS $\RKHS_U$. This is particularly useful for the case where an unstructured nonlinear system admits a control-affine reformulation \cite{Nijmeijer96}. This structure encodes that \eqref{eq:cKOp} has a natural evolution in the absence of input and allows one to interpret the two components separately w.l.o.g. Moreover, if $\RKHS_U$ is also $C_0$-universal, ${k}_X(\bm{x}^\prime,\bm{x})(1{+}k_U({{\bm{u}},{\bm{u}^\prime}}))$ induces an RKHS dense in the range of \eqref{eq:cKOp}.
  \end{remark}
 \subsubsection*{Representational equivalence}
 While our hypothesized dynamics are modeled by the linear operator $\Estim^*$, the explicit form of the simulation model is not immediately obvious due to an ``input-evolving" RKHS.
Intuitively, given a fixed control value would collapse the tensor product in \eqref{eq:TensDyn} and make the system autonomous in $\spIN$. We make such intuition rigorous, revealing equivalent linear parameter-varying (LPV) operator formulations of our hypothesis $G \in \HS{\spIN,\spOUT}$.
\begin{corollary}\label{cor:LPV}
Let $\spIN,\RKHS_U$ be separable and let $\RKHS^{1}_{U}\defeq{1}_U{\oplus}\RKHS_U$ with $1_U \defeq \mathrm{span}\{ f \}
\quad\text{where}\quad 
f:\Set{U} \to \mathbb{R},\;\; f(\bm{u}) \equiv 1 \;\; \forall \bm{u} \in \Set{U}$. Consider $\spOUT=\RKHS^{1}_{U}  \otimes  \spIN$ in $\adjoint{G} {\in} \HS{\spOUT{,}\spIN}$ with $\left(e_{i}\right)_{i \in \mathbb{N}}$ an orthonormal basis of $\RKHS_{U}$ and $\left(e^*_{i}\right)_{i \in \mathbb{N}}$ its dual basis. Then, the following isometry
     $\adjoint{G} \longleftrightarrow  \concat{1}{0}^*{ \otimes} A^*+\lilsum_{i \in \mathbb{N}} \concat{0}{e_i}^*{ \otimes} B^*\left({e_i}\right)$,
explicitly establishes the isometric isomorphisms between $\HS{\RKHS^{1}_{U}  \otimes  \spIN,\spIN} \cong {\RKHS^1_{U}}^*  \otimes  \HS{\spIN} = \HS{\spIN} \oplus \RKHS_U^*  \otimes  \HS{\spIN}$ where $A^*\defeq A^*\left(\concat{1}{0}\right)$ and $B^*(e_i)\defeq A^*\left(\concat{0}{e_i}\right)$. Moreover $\HS{\spIN} \oplus \RKHS_U^*  \otimes  \HS{\spIN} \cong  \HS{\spIN} \oplus \HS{\RKHS_U,\HS{\spIN}}$, inducing equivalent RKHS embeddings in Figure \ref{fig:Equiv}.

\begin{proof}
The isometric isomorphism is a direct consequence of \cite[Theorem 12.3.2. \& Proposition 12.3.1.]{aubin2011applied} where we factored out the control input-independent span. Applying the former results once more, we have that $\HS{\spIN} \oplus \RKHS_U^*  \otimes  \HS{\spIN} \cong  \HS{\spIN} \oplus \HS{\RKHS_U,\HS{\spIN}}$. To see the equivalence of representations in Figure \ref{fig:Equiv}, consider a finite-dimensional kernel $\innerprod{\bm{u}}{\bm{u}^\prime}$ for $\RKHS_U$ whose orthonormal basis is the standard basis $\left(\bm{e}_{i}\right)_{i \in [n_u]}$ of $\Set{R}^{n_u}$. Then, since
${\bm{u}}{=}\lilsum^{n_u}_{i=1}[\bm{e}_i \otimes \bm{e}_i]({\bm{u}}){=}\lilsum^{n_u}_{i=1}\left\langle \bm{e}_i^*,{\bm{u}}\right\rangle \bm{e}_i $, we have
\begin{subequations}
    \begin{align}
\!\!\!\!\!\!\adjoint{G}({\bm{v}}{ \otimes} {\phi}_{X}(\bm{x})) = (A^*&{+}\underbrace{[\lilsum^{n_u}_{i=1}\!B^*(\bm{e}_i)\left\langle\bm{e}_i^*,  {\bm{u}} \right\rangle]}_{M^*({\bm{u}})}) {\phi}_{X}(\bm{x}), \\
  =A^* {\phi}_{X}(\bm{x}){+}&\underbrace{\left(\lilsum^{n_u}_{i=1} B^*(\bm{e}_i){\phi}_{X}(\bm{x}) \otimes \bm{e}_i^*\right)}_{B^*_{{\phi}_{X}(\bm{x})}}{\bm{u}},
\end{align}
\end{subequations}
leading to Figure \ref{fig:Equiv}, and concluding the proof.
\end{proof} 
\end{corollary}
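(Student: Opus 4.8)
The plan is to obtain the two displayed isometric isomorphisms as a composition of three textbook identities for separable Hilbert spaces and then to read off the explicit image of $\adjoint{G}$ by expanding in a convenient orthonormal basis. The first ingredient is the canonical identification $\HS{\mathcal{A}\otimes\mathcal{B},\mathcal{C}}\cong\mathcal{A}^*\otimes\HS{\mathcal{B},\mathcal{C}}$, which is associativity of the Hilbert tensor product combined with $\HS{\mathcal{F},\mathcal{Y}}\cong\mathcal{Y}\otimes\mathcal{F}^*$; I would invoke \cite[Thm.~12.3.2 \& Prop.~12.3.1]{aubin2011applied} with $\mathcal{A}=\RKHS^1_U$ and $\mathcal{B}=\mathcal{C}=\spIN$. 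This is legitimate because $\spIN,\RKHS_U$ --- hence $\RKHS^1_U=1_U\oplus\RKHS_U$, the space $\HS{\spIN}$, and all tensor products occurring below --- are separable (Assumption~\ref{asm::RKHS}), and because $\adjoint{G}\in\HS{\spOUT,\spIN}$ is assumed. The second ingredient is distributivity of $\otimes$ over the orthogonal direct sum $\RKHS^1_U=1_U\oplus\RKHS_U$, which gives ${\RKHS^1_U}^*\otimes\HS{\spIN}\cong\bigl(1_U^*\otimes\HS{\spIN}\bigr)\oplus\bigl(\RKHS_U^*\otimes\HS{\spIN}\bigr)$ with the one-dimensional factor $1_U^*\otimes\HS{\spIN}$ collapsing to $\HS{\spIN}$. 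The third ingredient is another pass of $\HS{\mathcal{F},\mathcal{Y}}\cong\mathcal{Y}\otimes\mathcal{F}^*$, now with $\mathcal{F}=\RKHS_U$ and $\mathcal{Y}=\HS{\spIN}$, yielding $\RKHS_U^*\otimes\HS{\spIN}\cong\HS{\RKHS_U,\HS{\spIN}}$; chaining the three produces the stated chain $\HS{\RKHS^1_U\otimes\spIN,\spIN}\cong{\RKHS^1_U}^*\otimes\HS{\spIN}=\HS{\spIN}\oplus\RKHS_U^*\otimes\HS{\spIN}\cong\HS{\spIN}\oplus\HS{\RKHS_U,\HS{\spIN}}$.

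Next I would make the correspondence explicit and check that it is norm-preserving. Choose the orthonormal basis of $\RKHS^1_U$ consisting of the unit constant $\concat{1}{0}$ spanning $1_U$ together with the orthonormal basis $(\concat{0}{e_i})_{i\in\Set{N}}$ of $\RKHS_U$, and let $(\concat{1}{0}^*,\concat{0}{e_i}^*)$ denote the dual basis. Under the first isometry every $\adjoint{G}\in\HS{\spOUT,\spIN}$ is determined by its slices $A^*\defeq A^*(\concat{1}{0})$ and $B^*(e_i)\defeq A^*(\concat{0}{e_i})$ in $\HS{\spIN}$ and reassembles as $\adjoint{G}\longleftrightarrow\concat{1}{0}^*\otimes A^*+\lilsum_{i\in\Set{N}}\concat{0}{e_i}^*\otimes B^*(e_i)$, the series converging in Hilbert-Schmidt norm exactly because $\adjoint{G}$ is Hilbert-Schmidt. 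That this map is an isometry I would verify by expanding the Hilbert-Schmidt norm over the product orthonormal basis of $\spOUT=\RKHS^1_U\otimes\spIN$ and invoking the tensor inner-product rule \eqref{eq:TPrule}, obtaining $\hnorm{\adjoint{G}}^2=\hnorm{A^*}^2+\lilsum_i\hnorm{B^*(e_i)}^2$, precisely the norm of the corresponding element of $\HS{\spIN}\oplus\HS{\RKHS_U,\HS{\spIN}}$; the assignment $B^*\colon e_i\mapsto B^*(e_i)$ is then the Hilbert-Schmidt operator $\RKHS_U\to\HS{\spIN}$ furnished by the third ingredient.

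Finally, to read off the equivalent embeddings in Figure~\ref{fig:Equiv}, I would specialize to the finite-dimensional control kernel $\innerprod{\bm{u}}{\bm{u}^\prime}$, whose RKHS is $\RKHS_U=\Set{R}^{n_u}$ with canonical feature map the identity and orthonormal basis the standard basis $(\bm{e}_i)_{i\in[n_u]}$. Writing $\bm{v}=\concat{1}{\bm{u}}$ and expanding $\bm{u}=\lilsum_{i=1}^{n_u}\innerprod{\bm{e}_i^*}{\bm{u}}\bm{e}_i$, I would compute $\adjoint{G}(\bm{v}\otimes{\phi}_{X}(\bm{x}))$. Collecting the scalar coefficients $\innerprod{\bm{e}_i^*}{\bm{u}}$ into a single state-independent, control-dependent operator $M^*(\bm{u})\defeq\lilsum_i B^*(\bm{e}_i)\innerprod{\bm{e}_i^*}{\bm{u}}\in\HS{\spIN}$ yields the simulation form $(A^*+M^*(\bm{u})){\phi}_{X}(\bm{x})$; alternatively, factoring $\bm{u}$ out on the right via $B^*_{{\phi}_{X}(\bm{x})}\defeq\lilsum_i B^*(\bm{e}_i){\phi}_{X}(\bm{x})\otimes\bm{e}_i^*\in\mathcal{L}(\Set{R}^{n_u},\spIN)$ yields the predictive-control form $A^*{\phi}_{X}(\bm{x})+B^*_{{\phi}_{X}(\bm{x})}\bm{u}$. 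Since all three expressions are regroupings of one and the same bilinear map, the double arrows of Figure~\ref{fig:Equiv} follow.

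I do not expect an analytic obstacle here: each step reduces to a standard isometry of Hilbert tensor products, and the only convergence issue --- summability of the series over $i\in\Set{N}$ --- is settled for free by $\adjoint{G}\in\HS{\spOUT,\spIN}$. The step requiring genuine care is the bookkeeping of primals versus duals, so that it is ${\RKHS^1_U}^*$ (not $\RKHS^1_U$) in the first step and $\RKHS_U^*$ in the last that appear, together with using that $\RKHS^1_U$ is \emph{defined} as the \emph{orthogonal} direct sum $1_U\oplus\RKHS_U$: this is what lets $\otimes$ distribute without cross terms and keeps the constant direction orthogonal to $\RKHS_U$ throughout, so that $A^*$ and the $B^*(e_i)$ genuinely live in the claimed summands.
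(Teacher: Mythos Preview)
Your proposal is correct and follows essentially the same route as the paper: both invoke \cite[Thm.~12.3.2 \& Prop.~12.3.1]{aubin2011applied} for the tensor--HS identifications, factor out the constant direction via the orthogonal splitting $\RKHS^1_U=1_U\oplus\RKHS_U$, and then specialize to the linear control kernel $\innerprod{\bm{u}}{\bm{u}^\prime}$ with standard basis $(\bm{e}_i)$ to derive the three equivalent embeddings of Figure~\ref{fig:Equiv}. Your version is more explicit---spelling out the distributivity step and the norm identity $\hnorm{\adjoint{G}}^2=\hnorm{A^*}^2+\lilsum_i\hnorm{B^*(e_i)}^2$---but the underlying argument is the same.
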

  \begin{table*}[htp!]
\scriptsize
  \caption{Popular existing operator-theoretic representations for control systems. cKOR includes all (including non-affine) representations with a straightforward modification of sampling/embedding operators.}  \label{tab:cKORpower}
\centering
  \begin{tabular}{l|ccc} \toprule
   \textsc{Method} & \txt{feature dynamics} & \txt{range kernel} $\langle\cdot,\cdot\rangle_{\spOUT}$& \txt{domain kernel} $\langle\cdot,\cdot\rangle_{\spIN}$\\ \midrule
       DMD \cite{schmid_2010} & ${\bm{A}}^*\bm{x}$ & $\innerprod{\bm{x}}{\bm{x}^\prime}$ & $\innerprod{\bm{x}}{\bm{x}^\prime}$\\
    DMDc \cite{Proctor2016} & ${\bm{A}}^*\bm{x}+{\bm{B}}^*\bm{u}$ & $\innerprod{\bm{x}}{\bm{x}^\prime}+\innerprod{\bm{u}}{\bm{u}^\prime}$ & $\innerprod{\bm{x}}{\bm{x}^\prime}$\\
   $k$EDMD \cite{Klus2020eig,Williams_KernelDMD2014} & $\adjoint{A}{\phi}_{X}({\bm{x}})$ & $k_{X}(\bm{x},\bm{x}^\prime)$ & $k_{X}(\bm{x},\bm{x}^\prime)$ \\ 
   $k$EDMDc \cite{Caldarelli2024} & $ \adjoint{A}{\phi}_{X}({\bm{x}})+\adjoint{B}\bm{u}$ & $k_{X}(\bm{x},\bm{x}^\prime)+\innerprod{\bm{u}}{\bm{u}^\prime}$ & $k_{X}(\bm{x},\bm{x}^\prime)$ \\ 
        \bf cKOR & $\adjoint{A}{\phi}_{X}({\bm{x}})+\adjoint{B} {\phi}_{X}({\bm{x}}){\otimes } {\phi}_{U}({\bm{u}})$ & $k_{X}(\bm{x},\bm{x}^\prime)+{k}_{X}(\bm{x},\bm{x}^\prime){k}_{U}({\bm{u}},{\bm{u}^\prime})$ & $k_{X}(\bm{x},\bm{x}^\prime)$\\
    \bottomrule
  \end{tabular}
\end{table*}
While the above result may seem technical, \textit{the established equivalence is of central importance} in practice. Namely, once the isometric isomorphism between two spaces is established, one typically \textit{works with whichever space is more convenient for the problem} at hand, as those shown in Figure \ref{fig:Equiv}. For example, formalizing a regression problem is shown to be cumbersome with input-parameterized operators \cite{Peitz2020,Nuske2023,Philipp2023b}, particularly in infinite dimensions, while it is extremely helpful to build predictors with LPV models in mind. In particular, recasting the operator-based model in an LPV form enables linear-algebraic multi-step prediction and the use of efficient predictive control schemes, e.g., \cite{hoekstra2023computationally}. Conversely, the operator-/vector-valued regression formulation via \eqref{eq:risk} and Theorem \ref{thm:CAkern} enables straightforward and flexible regression, but it may not be immediately apparent how to achieve efficient multi-step prediction and control. As we demonstrate, the established equivalence allows one to use the best of both perspectives: \textit{vector-valued regression for learning and analysis} and the \textit{LPV forms for prediction and control}.
  Not only do our results describe the equivalence between different representations, but they also imply that implicit representations for control operators are completely described through scalar-valued kernels, as summarized in Table \ref{tab:cKORpower}. Hence, our equivalence results help bridge an important gap in understanding linear control operators, enabling the full utilization of the available RKHS structure, both for analysis and learning.
\subsection{Control Koopman operator approximations in RKHS}  
After constructively establishing different equivalent RKHS-based hypotheses, we now study their approximation capabilities. Recall that, under Assumptions \ref{asm:Bcko} and \ref{asm::RKHS}, \textit{the operator restriction is Hilbert-Schmidt} \(\mathcal{T}\defeq{\mathcal{G}}{S}_{\mu^{\prime}}\in\HS{\spIN,L^{2}_{\nu}}\) and approximation of \({\mathcal{G}}\) over functions in \(\spIN\) in \textit{operator norm} is feasible with finite-rank operators. This is critical, as $\|{\cdot}\|_{\mathrm{op}}$, which measures the worst-case difference between two operators in a target (image) space, leads to bounds quantifying the maximum possible error effect on any observable from the domain $\spIN$. 
Searching over $\operatorname{HS}(\spIN,\spOUT)$ via reproducing kernels and interpreting the image space as $\truOUT$,  \eqref{eq:MSEriskmin} yields a principled surrogate for infinite-dimensional regression \cite{Mollenhauer2020}, so that \eqref{eq:HSrisk} sharply bounds the $\|{\cdot}\|_{\spIN  {\rightarrow}  L^{2}_{\nu}}^{2}$-error.
\begin{lemma}\label{lem:riskBound}
        Under Assumptions \ref{asm:Bcko} and \ref{asm::RKHS}, \eqref{eq:risk} sharply satisfies $\|{\mathcal{G}}-G\|_{\spIN  {\rightarrow}  L^{2}_{\nu}}^{2} \leq \mathsfit{R}(G)$ for every \(G \in \HS{\spIN,\spOUT}\).
\begin{proof}
 Let \(G \in \HS{\spIN,\spOUT}\). We have that 
\begin{subequations}
    \begin{align}
&\|{\mathcal{G}}-G\|_{\spIN  {\rightarrow}  L^{2}_{\nu}}^{2} {=}\textstyle\sup _{\|f\|_{\spIN}=1}\|\mathcal{T} f- G f\|_{L^{2}_{\nu}}^{2}\\
& {=}\textstyle\sup _{\|f\|_{\spIN}=1}\|[\mathcal{T} f](\cdot) - [G f](\cdot)\|_{L^{2}_{\nu}}^{2} \label{eq:preRP}\\
& {=}\sup _{\|f\|_{\spIN}=1}\|\left\langle f, \phi_X(\bm{\mathsf{f}}(\cdot))\right\rangle_{\spIN}-\left\langle  f,  \adjoint{G} \phi_{Z}(\cdot)\right\rangle_{\spIN}\|_{L^{2}_{\nu}}^{2} \label{eq:postRP} \\
& {=}\textstyle\sup _{\|f\|_{\spIN}=1} \mathbb{E}_{\bm{z} \sim \nu}\left[\left\langle f,\phi_X(\bm{\mathsf{f}}(\bm{z}))- G^{*} \phi_{Z}({\bm{z}})\right\rangle_{\spIN}^{2}\right] \label{eq:preCaSch}\\
& \leq \sup_{\|f\|_{\spIN}=1} \!\!\! \mathbb{E}_{\bm{z} \sim \nu}\left[\|f\|_{\spIN}^{2}\left\|\phi_X(\bm{\mathsf{f}}(\bm{z})){-}G^{*} \phi_{Z}({\bm{z}})\right\|_{\spIN}^{2}\right]\label{eq:CaSch}
\end{align}
\end{subequations}
where we use the definition \eqref{eq:cKOp} and the reproducing property in \eqref{eq:preRP}-\eqref{eq:postRP} together with the Cauchy-Schwarz inequality in \eqref{eq:CaSch}. We say the upper bound is sharp if it holds for all $f {\in} \spIN$ and $ \exists h {\in} \spIN$ so $ \mathsfit{R}(G){=}  \|{\mathcal{G}}{-}G\|_{\spIN {\rightarrow}  L^{2}_{\nu}}^{2}$. Hence, the above bound is sharp by considering that we have $\nu$-a.e. \(\phi_X(\bm{\mathsf{f}}(\bm{z})){-}G^{*} \phi_{Z}({\bm{z}}){=}e\) for some constant \(e \in \spIN\) so the equality is attained by setting \(f{=}e /\|e\|_{\spIN}\) in the supremum.
\end{proof}
\end{lemma}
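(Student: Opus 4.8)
The plan is to reduce the operator norm to a pointwise scalar regression residual and then apply Cauchy--Schwarz one unit vector at a time. First I would write $\|{\mathcal{G}}-G\|_{\spIN\to L^{2}_{\nu}}^{2}=\sup_{\|f\|_{\spIN}=1}\|\mathcal{T}f-Gf\|_{L^{2}_{\nu}}^{2}$, using that $\mathcal{T}={\mathcal{G}}S_{\mu^{\prime}}$ is the restriction of ${\mathcal{G}}$ to $\spIN$, so that ${\mathcal{G}}f$ and $\mathcal{T}f$ agree as $L^{2}_{\nu}$-classes for $f\in\spIN$; Assumptions \ref{asm:Bcko} and \ref{asm::RKHS} guarantee $\mathcal{T}\in\HS{\spIN,L^{2}_{\nu}}$, so every function written below is measurable and square-integrable and the manipulations are legitimate.

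Next, for fixed $\bm z$, Definition \ref{def:cko} together with the reproducing property give $[\mathcal{T}f](\bm z)=f(\bm{\mathsf{f}}(\bm z))=\langle f,\phi_X(\bm{\mathsf{f}}(\bm z))\rangle_{\spIN}$ and $[Gf](\bm z)=\langle Gf,\phi_{Z}(\bm z)\rangle_{\spOUT}=\langle f,G^{*}\phi_{Z}(\bm z)\rangle_{\spIN}$, so that, setting $r(\bm z)\defeq\phi_X(\bm{\mathsf{f}}(\bm z))-G^{*}\phi_{Z}(\bm z)$, one obtains $\|\mathcal{T}f-Gf\|_{L^{2}_{\nu}}^{2}=\mathbb{E}_{\bm z\sim\nu}[\langle f,r(\bm z)\rangle_{\spIN}^{2}]$. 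Bounding $\langle f,r(\bm z)\rangle_{\spIN}^{2}\le\|f\|_{\spIN}^{2}\,\|r(\bm z)\|_{\spIN}^{2}$ under the expectation, then imposing $\|f\|_{\spIN}=1$ and taking the supremum, the right-hand side collapses to $\mathbb{E}_{\bm z\sim\nu}[\|r(\bm z)\|_{\spIN}^{2}]$, which is exactly \eqref{eq:risk} and hence equals $\mathsfit{R}(G)$ by \eqref{eq:HSrisk}; this yields $\|{\mathcal{G}}-G\|_{\spIN\to L^{2}_{\nu}}^{2}\le\mathsfit{R}(G)$.

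For sharpness I would observe that Cauchy--Schwarz is the only inequality used and it is tight exactly when $f$ is colinear with $r(\bm z)$. Hence, whenever $r(\bm z)$ is $\nu$-a.e. equal to a fixed vector $e\in\spIN$ --- a configuration that does occur --- choosing $f=e/\|e\|_{\spIN}$ attains equality in every estimate, so the bound cannot be improved in general without extra structural hypotheses on ${\mathcal{G}}-G$.

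The argument is short, so the real work is bookkeeping rather than a genuine obstacle: one must ensure that $\mathcal{T}f$ and $Gf$ are honest $L^{2}_{\nu}$-representatives, so that the pointwise rewriting and the exchange of $\|\cdot\|_{L^{2}_{\nu}}$ with the inner product and expectation are valid --- this is exactly what the Hilbert--Schmidt conclusion from Assumptions \ref{asm:Bcko}--\ref{asm::RKHS} and the Tonelli-type reasoning already invoked just before \eqref{eq:HSrisk} provide --- and one must pin down that ``sharp'' here means attainability of equality for some admissible configuration (the a.e.\ constant residual), not equality for every $G$.
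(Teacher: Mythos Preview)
Your proposal is correct and follows essentially the same route as the paper: rewrite the operator-norm residual via the reproducing property as $\mathbb{E}_{\bm z\sim\nu}[\langle f,r(\bm z)\rangle_{\spIN}^{2}]$ with $r(\bm z)=\phi_X(\bm{\mathsf{f}}(\bm z))-G^{*}\phi_Z(\bm z)$, apply Cauchy--Schwarz pointwise, and argue sharpness via the a.e.\ constant-residual case $r(\bm z)=e$ with $f=e/\|e\|_{\spIN}$. Your added remarks on measurability and the meaning of ``sharp'' are helpful bookkeeping but do not change the argument.
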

As the following theorem shows, with the help of Lemma \ref{lem:riskBound}, the operator norm error vanishes for $C_0$-universal $\RKHS_X, \RKHS_U$.
\begin{theorem}[Arbitrary accuracy]\label{thm:opNorm}
Let Assumptions \ref{asm:Bcko} and \ref{asm::RKHS} hold and let $\spOUT \supseteq \RKHS_U \otimes \RKHS_X$ be an RKHS. Then
    \begin{enumerate}
        \item for every $\delta>0$, there is a \textit{finite-rank} $G \in \HS{\spIN,\spOUT}$ such that
        \[
            \underbrace{\mathsfit{E}(G)\defeq\|{\mathcal{G}}-G\|_{\spIN {\rightarrow} L^{2}_{\nu}}}_{\text{operator norm error}}<\underbrace{\|[I{-}P_{\spOUT}]{\mathcal{G}}\|_{\spIN {\rightarrow} L^{2}_{\nu}}}_{{\text{representation bias}~\mathsfit{B}(\spOUT)}}~+~\delta,
        \]
        with $P_{\spOUT}$ the orthogonal projector onto $\mathrm{cl}(\mathrm{Im}\,S_\nu) \subseteq L^2_\nu(\Set{Z})$;
       \item for every $\delta>0$, there is a \textit{finite-rank} $G \in \HS{\spIN,\spOUT}$ such that $\mathsfit{E}(G)<\delta$, if and only if $\mathsfit{B}(\spOUT)=0$; in particular, this holds when $\spIN$ and $\RKHS_U$ \textit{are} $C_0$-\textit{universal}.
    \end{enumerate}
\end{theorem}
\begin{proof}
Let $\delta > 0$. By the triangle inequality
\begin{align*}
  \mathsfit{E}(G)\leq&~\|[I-P_{\spOUT}]{\mathcal{G}}\|_{\spIN {\rightarrow} L^{2}_{\nu}} + \|P_{\spOUT}{\mathcal{G}}-G\|_{\spIN {\rightarrow} L^{2}_{\nu}} \\
  \leq&~\mathsfit{B}(\spOUT) + \|S_\nu\|\|P_{\spOUT}{\mathcal{G}}-G\|_{\spIN \rightarrow \spOUT} 
\end{align*}
where the error splits into \textit{representation bias} $\mathsfit{B}(\spOUT)$ and \textit{rank reduction error} $\|P_{\spOUT}{\mathcal{G}}-G\|_{\spIN {\rightarrow} \spOUT}$. By the fact that finite-rank operators from $\spIN \to \spOUT$ are dense in $\HS{\spIN,\spOUT}$, we have that $\|S_\nu\|\|P_{\spOUT}{\mathcal{G}}-G\|_{\spIN \rightarrow \spOUT} < \delta$ so that $\mathsfit{E}(G) < \mathsfit{B}(\spOUT) +\delta$. 
\newline
\textit{If $\spIN, \RKHS_U$ are $C_0$-universal}: $\RKHS_U \otimes \RKHS_X$ is dense in $L^2_{\nu}(\Set{Z})$, hence $\operatorname{Im}(\mathcal{T}) \subseteq \operatorname{cl}(\operatorname{Im}(S_\nu))$ for $\mathcal{T}\defeq\mathcal{G}{S}_{\mu^\prime}$, so that $\|[I-P_{\spOUT}]{\mathcal{G}}\|_{\spIN {\rightarrow} L^{2}_{\nu}}=\mathsfit{B}(\spOUT)=0$ \cite[Chapter 4]{IngoSteinwart2008SupportMachines}.
\newline
\textit{If $\mathsfit{B}(\spOUT)=0$}: Let ${T}^\dagger(\cdot)\defeq {\phi}_{X}(\bm{\mathsf{f}}(\cdot))$ and let $\mathscr{T}$ be the vector-valued RKHS isometrically isomorphic to $\HS{\spOUT,\spIN}\cong\spIN \otimes \spOUT$ via \cite[Cor.~4.5]{Mollenhauer2020}:
\begin{equation}\label{eq:Tspace}
    \!\!\mathscr{T}\defeq\{{T}\!:\Set{Z} \to \spIN \mid T  {=}  G^{*} \phi_{Z}(\cdot), G^*{\in} \HS{\spOUT,\spIN}\},
\end{equation}
induced by the operator-valued kernel $K_{Z}(\bm{z},\bm{z}^\prime) \defeq {k_Z}(\bm{z},\bm{z}^\prime)I_{\spIN}$. Since $\mathsfit{B}(\spOUT)=0$ gives $\operatorname{Im}(\mathcal{T}) \subseteq \operatorname{cl}(\operatorname{Im}(S_\nu))$, the target $T^\dagger$ lies in the closure of $\mathscr{T}$ \eqref{eq:Tspace} inside $L^{2}_{\nu}(\spIN)\equiv L^{2}_{\nu}(\Set{Z},\spIN) \cong \HS{\spIN, L^{2}_{\nu}(\Set{Z})}$; hence, for every $\delta>0$ there exists $G^{*} \in \HS{\spOUT,\spIN}$ with corresponding $T\defeq G^*\phi_{Z}(\cdot)\in\mathscr{T}$ satisfying $\|T^\dagger-T\|_{L^{2}_{\nu}(\spIN)}^{2} = \|\mathcal{T}-S_\nu G\|_{\mathrm{HS}}^2 \equiv \mathsfit{R}(G) < \delta^2$. Lemma \ref{lem:riskBound} bridges this HS bound to the operator norm: $\mathsfit{E}^2(G) \leq \mathsfit{R}(G) < \delta^2$, so $\mathsfit{E}(G)<\delta$, completing the proof.\end{proof}
The above result reveals that whenever the RKHSs $\spIN,\RKHS_U$ used to define \eqref{eq:CAkern} are $C_0$-universal, then there is no representation bias/error $\mathsfit{B}(\spOUT) = 0$ and one can find arbitrarily
good finite-rank approximations of control Koopman operators \ref{eq:cKOp}. Note that Assumption \ref{asm::RKHS} on the RKHSs $\spIN,\spOUT$ is non-restrictive and not actually an assumption on the problem -- it solely depends on the choice of the kernel and is readily satisfied by popular kernels such as Gaussian, Laplacian or Matérn kernels \cite{IngoSteinwart2008SupportMachines}.
Moreover, notice that we do not require the true operator ${\mathcal{G}}\!: L^2_{{{\mu^{\prime}}}} \to L^2_{\nu}$ to be compact, let alone Hilbert-Schmidt, for Theorem \ref{thm:opNorm} to hold. 
\subsection{Approximating the dynamics of observables}
One of the practical appeals of control operator models is the \textit{ability to forecast of any observable belonging to the hypothetical domain} $\spIN$. For that, operator norm approximation is critical, allowing arbitrarily accurate prediction of any measurement/observable $y \in \spIN$. We formalize this in the following corollary.
\begin{corollary}\label{coro:Opred}
Let the conditions of Theorem \ref{thm:opNorm} hold for $C_0$-universal $\RKHS_X, \RKHS_U$, and denote the true one-step evolution of an observable $y \in \spIN$ as ${y}_+(\bm{z})\defeq [{\mathcal{G}}{S}_{\mu^{\prime}} y](\bm{z})$. Then, for every $y \in \spIN$ and $\varepsilon>0$, there exists a $G \in \HS{\spIN,\spOUT}$ such that ${\|{y}_+-{S}_{\nu}Gy\|_{L^2_{\nu}}} < \varepsilon$.

\begin{proof}
   The operator-norm bound gives ${\|({\mathcal{G}}{S}_{\mu^{\prime}} -{S}_{\nu}G)y\|_{L^2_{\nu}}}\leq \mathsfit{E}(G)\|y\|_{\spIN}$, and Theorem~\ref{thm:opNorm} with $\delta = \nicefrac{\varepsilon}{\|y\|_{\spIN}}$ yields the assertion ($y=0$ immediate).
\end{proof}
\end{corollary} 
   The main added assumption in the above result is $y \in \spIN$, which may be straightforwardly fulfilled for known observables of interest, e.g., components of the full-state observable, by adding a linear kernel component to the hypothetical domain to include the state: ${\RKHS}_X^{\txt{id}} = \RKHS_X{\oplus}\txt{Id}(\Set{X})$, which is induced by the kernel ${k}^{\txt{id}}_{X}(\bm{x}, \bm{x}^\prime) = k_{X}(\bm{x},\bm{x}^\prime)+\innerprod{\bm{x}}{\bm{x}^\prime}$. Still, such a composite RKHS is universal if $k_{X}(\bm{x},\bm{x}^\prime)$ is $C_0$-universal, allowing Theorem \ref{thm:opNorm} to hold with a vanishing representation bias ${\mathsfit{B}(\cdot)}= 0$. However, existing approaches append the state to a data-independent and finite dictionaries \cite{KORDA2018149,STRASSER20232257} or evaluate an empirical inner product that is only an approximation of the canonical ${k}^{\txt{id}}_{X}(\bm{x}, \bm{x}^\prime)$. As such, there is no guarantee for unbiased representations ${\mathsfit{B}(\cdot)}= 0$ (cf. Theorem \ref{thm:opNorm}), which is crucial for achieving statistical consistency \cite{IngoSteinwart2008SupportMachines}.
   Note that we do not require \eqref{eq:cKOp} to be compact (let alone Hilbert-Schmidt), while our hypothesis does not need to be invariant w.r.t. \eqref{eq:cKOp} to admit arbitrarily accurate prediction.
\begin{remark}[Improved analysis]
    In contrast to our nonparametric and discrete-time setting, existing operator approximation error analyses for control use pre-RKHS hypothesis approaches \cite{Nuske2023}, even when using kernels \cite{Philipp2023b}, and do not directly benefit from the structure of an RKHS. Additionally, the analysis therein comes with an irreducible time-discretization error and explicitly depends on control input dimensionality, e.g., using infinite-dimensional control spaces (cf. Remark \ref{rmk:infU}) would render regression and analysis infeasible for the approaches in \cite{Peitz2020,Nuske2023,Philipp2023b}. Although the aforementioned works do focus on probabilistic finite-data error bounds for a set of constant input Koopman operators, our theoretical analysis indicates that sharper and more flexible results should be readily available. Exploring this further is, however, out of the scope of this work.
\end{remark}
\section{Data-Driven Control Koopman Operators
}\label{sec:cKOR}
\subsubsection*{Regularized risk minimization} After establishing the Hilbert-Schmidt (HS) representations over RKHSs, we continue with formulating a well-posed regression problem to solve \eqref{eq:MSEriskmin}. Recognizing that the estimator-dependent, projected component of \eqref{eq:HSrisk} admits a unique minimizer 
\begin{align}\label{eq:minNorm}
        \!\!\!\!\!\argmin_{\substack{\innerprod{ S_{\nu} g}{S_{\nu}G f} = \innerprod{S_{\nu}g}{\mathcal{T} f}\\ f \in \spIN, g \in \spOUT }}\|{G} \|_{\HS{\spIN,\spOUT}} = \left({C}_{ZZ}\right)^{\dagger} C_{ZX_+},
\end{align}
via the cross-covariance $C_{ZX_+} \defeq S^*_{\nu}\mathcal{T}$ and covariance $C_{ZZ} \defeq S^*_{\nu}S_{\nu}$, where $\dagger$ denotes the Moore-Penrose pseudoinverse operator. However, the latter may still lead to a poorly-conditioned system of equations. To ensure well-posedness, we use Tikhonov regularization with $\gamma>0$
 \begin{align}\label{eq:infKRR}
\infEstim_{{\reg}} 
   & \defeq \argmin_{\Estim \in \HS{\spIN,\spOUT}}\mathsfit{R}(G)+ {\reg}\norm{\Estim}^{2}_{{\rm HS}}\notag\\ &= (C_{ZZ}+{\reg} {\mathrm{Id}_{\spOUT}})^{-1}C_{ZX_+} . \tag{\txt{r\textsc{rm}}}
\end{align}
It is easy to confirm that the objective in \eqref{eq:infKRR} is continuous, coercive, and strictly convex, making \eqref{eq:infKRR} a \textit{unique} minimizer \cite{Engl2015} of the regularized risk, incurring no loss of precision as existing finite-dimensional finite-section approaches \cite{Philipp2023b,morrison2023dynamic}.
\preSectionHard
\subsection{Estimating control Koopman operators from data}
Computing \eqref{eq:infKRR} would require access to all values of \eqref{eq:ncs} under the population distribution $\nu$ on $\Set{Z}$. Since this distribution is unavailable when learning from data, we rely on finite samples 
\begin{align}
    \Set{D}_{n}= \bigl\{\bm{z}^{(i)}\defeq (\bm{x}^{(i)},\bm{u}^{(i)}),\bm{x}_{+}^{(i)} \bigr\}_{i{\in}[n]}
\end{align}
that define \emph{sampling} operators
\begin{subequations}\label{eq:sampl}
    \begin{align}
\ES_Z:\spOUT\to\Set{R}^n,\quad \ES_{Z}h &= [h(\bm{z}^{(1)}) \cdots h(\bm{z}^{(n)}) ]^\top, \\
 \widehat{S}_U:\RKHS_U\to\mathbb{R}^n, \quad \ES_{U}g & =[g(\bm{u}^{(1)}) \cdots g(\bm{u}^{(n)}) ]^\top,\\
\ES_X:\spIN\to\Set{R}^n,\quad\ES_{X}f &= [f(\bm{x}^{(1)}) \cdots f(\bm{x}^{(n)}) ]^\top,
\\
\ES_+:\spIN\to\Set{R}^n,\quad \ES_{+}f &= [f(\bm{x}^{(1)}_+) \cdots f(\bm{x}^{(n)}_+) ]^\top. \label{eq:sample-succ}
\end{align} 
\end{subequations}
so their adjoints, the sampled \textit{embedding} operators \cite{smale2007learning}, are defined as $\aES_{Z}\bm{a} = \lilsum^n_{i=1}\phi_{Z}({\bm{z}^{(i)}}) (\bm{a})_i$, $\adjoint{\ES}_U\bm{b} = \lilsum^n_{i=1}{\phi}_{U}({\bm{u}^{(i)}})(\bm{b})_i$, $\aES_X\bm{c} = \lilsum^n_{i=1}{\phi}_{X}({\bm{x}^{(i)}}) (\bm{c})_i$ and $\aES_+\bm{d} = \lilsum^n_{i=1}{\phi}_{X}({\bm{x}_+^{(i)}}) (\bm{d})_i$. Using sampling operators as estimates for \eqref{eq:incl}, we obtain the regularized empirical risk
\begin{equation}\label{eq:Erisk}
    \widehat{\mathsfit{R}}_\gamma(G)\defeq {\textstyle\frac{1}{n}}\|\ES_+{-}\ES_Z G\|^2_{\HS{\spIN,\Set{R}^n}}+\gamma\|G\|^2_{\mathrm{HS}}\tag{$\widehat{\txt{\textsc{risk}}}$}
\end{equation}
and risk minimization  
\begin{align}\label{eq:KRR}
\!\!\!\EEstim_{{\reg}} 
\defeq\!\!\argmin_{{\Estim \in \HS{\spIN,\spOUT}}}{\!\!\!\!\!\!\widehat{\mathsfit{R}}_\gamma(G)}=\big(\aES_Z \ES_Z{+}n\gamma {\mathrm{Id}_{\spOUT}}\big)^{-1}\!\aES_Z \ES_+\,
\tag{$\widehat{\txt{r\textsc{rm}}}$} 
\end{align}
called \textit{kernel ridge regression} ({KRR}) \cite{IngoSteinwart2008SupportMachines,muandet2017kernel,kostic2022learning}. Other popular estimators from dynamical systems literature, such as \textit{principal component} (PCR) \cite{Williams2016,Williams_KernelDMD2014} and \textit{reduced rank} (RRR) regression \cite{kostic2022learning} are compatible with our setting by replacing the matrix ${\Kreg^{-1}}$ in Algorithm \ref{alg:ckor}..
\subsubsection*{Finite-dimensional predictors} When both $\spIN$ and $\spOUT$ are infinite-dimensional universal RKHSs, we can not instantiate the estimate \eqref{eq:KRR} on computer hardware directly as representer theorems \cite{Micchello2006,Carmeli2010VectorUniversality} only ``collapse" $\spOUT$, due to $\mathrm{dim(range}(\widehat{G}_\gamma)) < \infty$, making \eqref{eq:KRR} tractable when $\mathrm{dim}(\spIN) < \infty$. Still, as formalized in the following, \textit{prediction using $\widehat{G}_\gamma$ requires only matrix-vector multiplication.}
\begin{algorithm}[t]
\caption{\textsc{cKOR Model Learning \& Inference}}
\label{alg:ckor}
\begin{algorithmic}
\algoSepIn
 \Statex {\bfseries Input:} Data $\{\bm{x}^{(i)},\bm{u}^{(i)},\bm{x}^{(i)}_{+}\}_{i=1}^n$, state kernel $k_{X}(\bm{x},\bm{x}^\prime)\defeq  \innerprod{{\phi}_{X}(\bm{x})}{{\phi}_{X}(\bm{x}^\prime)}$ with  ${\phi}_{X}: \Set{X} \to \RKHS_X$, control kernel $k_{U}(\bm{u},\bm{u}^\prime)\defeq  \innerprod{{\phi}_{U}(\bm{u})}{{\phi}_{U}(\bm{u}^\prime)}$ with ${\phi}_{U}: \Set{U} \to \RKHS_U$, observable $\bm{y} \in \RKHS_X^{n_y}$, regularization $\gamma>0$.
  \algoSep
  \Statex{\texttt{/* Nonparametric Representation */}}
    \Statex \textbf{let:} $\bm{I}\defeq \mathrm{diag}(\bm{1})$ with $\bm{1}\in\mathbb{R}^n$,\; $(\bm{1})_i = 1$
  \Statex \textbf{let:} $\bm{k}_{X}(\bm{x}) \in \mathbb{R}^n$, \; $(\bm{k}_{X}(\bm{x}))_i = k_{X}(\bm{x},\bm{x}^{(i)})$
  \Statex \textbf{let:} $\bm{k}_{U}(\bm{u}) \in \mathbb{R}^n$, \; $(\bm{k}_{U}(\bm{u}))_i = k_{U}(\bm{u},\bm{u}^{(i)})$
  \Statex \textbf{let:} $\bm{K}_{X} \in \mathbb{R}^{n\times n}$, \; $(\bm{K}_{X})_{ij} = k_{X}(\bm{x}^{(i)},\bm{x}^{(j)})$
  \Statex \textbf{let:} $\bm{K}_{U} \in \mathbb{R}^{n\times n}$, \; $(\bm{K}_{U})_{ij} = k_{U}(\bm{u}^{(i)},\bm{u}^{(j)})$
    \Statex \textbf{let:} $\bm{K}_{+} \in \mathbb{R}^{n\times n}$, \; $(\bm{K}_{+})_{ij} = k_{X}(\bm{x}_+^{(i)},\bm{x}^{(j)})$
  \algoSep
  \Statex{\texttt{/* Model Learning */}}
  \Statex \textbf{compute:} $\bm{K}_Z = \bm{K}_{X} \odot (\bm{1}\bm{1}^\top+\bm{K}_{U})$
  \Statex \textbf{compute:} ${\Kreg^{-1}} = (\bm{K}_Z + n\gamma\,\bm{I})^{-1}$
  \Statex \textbf{compute:} $\bm{A} = ({\Kreg^{-1}}\bm{K}_{+})^\top$ \label{ssm:Begin}
  \Statex \textbf{let:} $\mathsfit{B}\bm{\mathsf{v}} \defeq  \mathrm{diag}(\bm{\mathsf{v}})\bm{A} \in \Set{R}^{n \times n}$ 
  \Statex \textbf{let:} $\bm{\mathsf{z}}(\bm{x},\bm{u}) \defeq \bm{k}_{X}(\bm{x})\odot(\bm{1}+\bm{k}_{U}(\bm{u})) \in \Set{R}^n$ \label{ssm:End}
  \algoSep
\Statex{\texttt{/* Inference */}}
\Statex \textbf{let:} ${\bm{Y}}_+=[\bm{y}({\bm{x}}_+^{(1)}),\dots,\bm{y}({\bm{x}}_+^{(n)})]^\top$
  \Statex \textbf{compute:} $\bm{C} = ({\Kreg^{-1}}\bm{Y}_+)^\top$ 
  \algoSep
  \Statex {\bfseries Return:} matrices $(\bm{A},\bm{C})$, map $\mathsfit{B}$, lifting $\bm{\mathsf{z}}(\cdot,\cdot)$.
\end{algorithmic}
\end{algorithm}
\begin{algorithm}[t]
\caption{\textsc{Nonparametric State-Space Prediction}}
\label{alg:ckorpred}
\begin{algorithmic}
\algoSepIn
 \Statex {\bfseries Input:} Matrices $(\bm{A},\bm{C})$, matrix-valued map $\mathsfit{B}$, lifting $\bm{\mathsf{z}}(\cdot,\cdot)$, initial condition $\bm{x}_0$ and input sequence $\{\bm{u}_k\}^{H-1}_{k=0}$.
  \algoSep
  \Statex{\texttt{/* Finite-Horizon Prediction */}}
  \Statex $\bm{\mathsf{z}}_1 = \bm{\mathsf{z}}(\bm{x}_0,\bm{u}_0)$ 
  \For{$k=1,2,\dots,H-1$}
        \State $\widehat{\bm{y}}_{k} = \bm{C}\,\bm{\mathsf{z}}_{k}$
        \State $\bm{\mathsf{z}}_{k+1} = \big(\bm{A}+\mathsfit{B}\,\bm{k}_{U}(\bm{u}_k)\big)\,\bm{\mathsf{z}}_{k}$
  \EndFor
  \Statex $\widehat{\bm{y}}_{H} = \bm{C}\,\bm{\mathsf{z}}_{H}$
  \algoSep
  \Statex {\bfseries Return:} predictions $\{\widehat{\bm{y}}_k\}_{k=1}^H$.
\end{algorithmic}
\end{algorithm}
\begin{proposition}\label{prop:cKORkrrSHORT} Consider control $\RKHS_U$ and state $\RKHS_X$ RKHSs, forming the domain $\RKHS_X$ and range $\RKHS_Z \defeq \spIN \oplus (\RKHS_U \otimes \RKHS_X)$ for $G$ in \eqref{eq:MSEriskmin}. Given an observable of interest $\bm{y} \in \RKHS^{n_y}_X$, the repeated application of \eqref{eq:KRR} is equivalent to a finite-dimensional state-space model defined by Algorithm \ref{alg:ckor} \& \ref{alg:ckorpred}.

\begin{proof}
By the \emph{push-through identity} and the reproducing property, we rewrite \eqref{eq:KRR} as $\aES_Z(\ES_Z \aES_Z+n\gamma \bm{I})^{-1} \ES_+ = \aES_Z{\Kreg^{-1}}\ES_+$. Then, by the structure of $\spOUT$, we have
\begin{subequations}\label{eq:KRRmain}
        \begin{align}
        \EAEstim_{{\reg}}&= \aES_X{\Kreg^{-1}}\ES_+\in \HS{\spIN,\spOUTx},\\
        \EBEstim_{{\reg}}&= {(\widehat{S}^*_U\circledcirc\ES^*_X)}{\Kreg^{-1}}\ES_+ \in \HS{\spIN,\RKHS_U\otimes \spOUTx},
    \end{align}
\end{subequations}
where we used the Khatri–Rao product \(\widehat S_U^{*}\circledcirc \widehat S_X^{*}\), defined by \((\widehat S_U^{*}\circledcirc \widehat S_X^{*})e_i=(\widehat S_U^{*}e_i)\otimes(\widehat S_X^{*}e_i)\) for an orthonormal basis \(\{e_i\}\); so that \( (\widehat S_U^{*}\circledcirc \widehat S_X^{*})^{*}(v\otimes w)=(\widehat S_U v)\odot(\widehat S_X w)\).
Moreover, we have that $\bm{K}_Z \defeq \ES_Z \aES_Z = {\bm{K}_{X}}+\bm{K}_{U}{\odot} {\bm{K}_{X}}\in \Set{R}^{n \times n}$ after using the identity $(\widehat{S}_U\circledcirc\ES_X)\adjoint{(\widehat{S}_U\circledcirc\ES_X)}=(\widehat{S}_U\adjoint{\widehat{S}_U}){\odot} (\ES_X\aES_X)$.
 For $\widehat{G}^*_\gamma=\aES_+{\Kreg^{-1}}{\ES_Z}$, or equivalently, $\adjoint{\EAEstim}_{{\reg}},\adjoint{\EBEstim}_{{\reg}}$, we can identify the weighted forward embedding $\adjoint{\widehat{F}}=\aES_+{\Kreg^{-1}}$ so that ${\bm{{A}}}=\ES_X\adjoint{\widehat{F}} \in \Set{R}^{n \times n}$. Propagating a scalar-valued observable $y \in \spIN$ for one step ($k=1$) gives
\begin{subequations}
    \begin{align}
    \!\!\!\!\!\widehat{{y}}_{1}(\bm{z}_0)\defeq~&[\EEstim_\gamma y](\bm{z}_0) = \innerprod{{{y}}}{\adjoint{\EEstim}_\gamma \phi_{Z}({\bm{z}_0})}_{\spIN}=\\
    =~&\innerprod{{y}}{\underbrace{\widehat{A}_\gamma^*{\phi}_{X}(\bm{x}_0){+}\widehat{B}_\gamma^*({\phi}_{U}({\bm{u}_0}) {\otimes}  {\phi}_{X}(\bm{x}_0))}_{ \mathsfit{z}_{1}({\bm{x}_{0}},\bm{u}_0)}}_{\spIN}.
\end{align}
\end{subequations}
By substituting operators \eqref{eq:KRRmain} and $\adjoint{\widehat{F}}$ we have that $\mathsfit{z}_{1}({\bm{x}_{0}},\bm{u}_0)$ equals to
\begin{subequations}
    \begin{align}
&\adjoint{\widehat{F}}( \widehat{S}_X\phi_{X}({\bm{x}_{0}})+{\widehat{S}_U{\phi}_U({\bm{u}_0})}{\odot} \ES_X {\phi}_{X}({\bm{x}_{0}})), \\
   &= \adjoint{\widehat{F}}( \bm{k}_{X}({\bm{x}_{0}})+{ \bm{k}_{U}({\bm{u}_{0}})}{\odot} \bm{k}_{X}({\bm{x}_{0}})), \\
  &= \adjoint{\widehat{F}} (\bm{k}_{X}({\bm{x}_{0}}) \odot (\bm{1}+\bm{k}_{U}(\bm{u}_0)))\defeq \adjoint{\widehat{F}}{\bm{\mathsf{z}}_1}.
\end{align} 
\end{subequations}
For $k=2$, we plug in the previous solution for $\mathsfit{z}_{1}({\bm{x}_{0}},{\bm{u}_{0}})$
\begin{subequations}
    \begin{align}
    \mathsfit{z}_{2}({\bm{x}_{0}},{\bm{u}_{0}},{\bm{u}_{1}}) &= \adjoint{\widehat{F}}({\bm{{A}}}{{{{\bm{\mathsf{z}}}}}_1}+{\widehat{S}_U{\phi}_U({\bm{u}_1})}{\odot} {\bm{{A}}}{{{{\bm{\mathsf{z}}}}}_1}) \\
        &= \adjoint{\widehat{F}}({\bm{{A}}}{{{{\bm{\mathsf{z}}}}}_1}{+}\mathrm{diag}({\bm{k}_{U}{(\bm{u}_{1})}}) {\bm{{A}}}{{{{\bm{\mathsf{z}}}}}_1}) \\
                &\defeq\adjoint{\widehat{F}}(\bm{A}+\mathsfit{B}\,\bm{k}_{U}(\bm{u}_{1})){\bm{\mathsf{z}}_1}.
\end{align}
\end{subequations}
It is straightforward to verify by induction that 
\[
   \widehat{{y}}_{H}(\bm{x}_0,\{\bm{u}_k\}^{H-1}_{k=0})=
   \begin{cases}
      \innerprod{\widehat{F}{y}}{\bm{\mathsf{z}}_1}, &\!\!\!H{=}1, \\
      \innerprod{\widehat{F}{y}}{\textstyle\prod_{k=1}^{H-1}\!\!(\bm{A}{+}\mathsfit{B}\bm{k}_{U}(\bm{u}_{k}))\bm{\mathsf{z}}_1}, & \!\!\!H{>}1.
   \end{cases}
\]
yields the predictor from Algorithm~\ref{alg:ckorpred} based on Algorithm~\ref{alg:ckor} after using the reproducing property column-wise $\widehat{\bm{y}}_{1}=[\,\widehat{F} y_1 \;\cdots\;\widehat{F} y_{n_y}\,]  = \Kreg^{-1} \bm{Y}_+ \defeq \bm{C}^\top$, accommodating vector-valued observables.
\end{proof}
\end{proposition}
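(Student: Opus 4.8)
The plan is to collapse the infinite-dimensional estimator \eqref{eq:KRR} into a finite matrix recursion in three stages: (i) move all the infinite-dimensional content into sampled embeddings via a push-through identity, (ii) split the estimator along the direct-sum structure of $\spOUT$ into an autonomous block and an input-coupled block, and (iii) unroll the one-step predictor by induction over the horizon to recover Algorithms~\ref{alg:ckor}--\ref{alg:ckorpred}.

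\textbf{First stage.} Apply the push-through identity $(\aES_Z\ES_Z + n\gamma\,\mathrm{Id}_{\spOUT})^{-1}\aES_Z = \aES_Z(\ES_Z\aES_Z + n\gamma\,\bm{I})^{-1}$ (valid since $n\gamma>0$) to rewrite $\widehat{G}_\gamma = \aES_Z\,\bm{K}_\gamma^{-1}\ES_+$ with $\bm{K}_\gamma \defeq \ES_Z\aES_Z + n\gamma\,\bm{I}\in\R^{n\times n}$; this exhibits $\widehat{G}_\gamma$ as a finite linear combination of the embeddings $\phi_Z(\bm{z}^{(i)})$ and reduces the only inverse to an $n\times n$ one. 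Then evaluate the Gram matrix $\bm{K}_Z \defeq \ES_Z\aES_Z$ using the block feature map $\phi_Z(\bm z) = \phi_X(\bm x)\oplus(\phi_U(\bm u)\otimes\phi_X(\bm x))$ attached to the kernel $k_X(1{+}k_U)$ from Theorem~\ref{thm:CAkern} and Remark~\ref{rmk:infU}; combined with the Khatri--Rao identity $(\widehat{S}_U^{*}\circledcirc\ES_X^{*})^{*}(v\otimes w) = (\widehat{S}_U v)\odot(\widehat{S}_X w)$, this gives $\bm{K}_Z = \bm{K}_X + \bm{K}_U\odot\bm{K}_X$, i.e.\ the \textbf{compute} step of Algorithm~\ref{alg:ckor}.

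\textbf{Second stage.} Split $\widehat{G}_\gamma$ along $\spOUT = \spIN\oplus(\RKHS_U\otimes\spIN)$ into the autonomous part $\widehat{A}_\gamma = \aES_X\bm{K}_\gamma^{-1}\ES_+ \in \HS{\spIN,\spIN}$ and the input-coupled part $\widehat{B}_\gamma = (\widehat{S}_U^{*}\circledcirc\ES_X^{*})\bm{K}_\gamma^{-1}\ES_+ \in \HS{\spIN,\RKHS_U\otimes\spIN}$, and introduce the weighted forward embedding $\widehat{F}^{*} \defeq \aES_+\bm{K}_\gamma^{-1}$, so that $\bm{A} = \ES_X\widehat{F}^{*}\in\R^{n\times n}$. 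Evaluating the one-step prediction of a scalar observable $y\in\spIN$ through the adjoint, namely $\widehat{y}_1(\bm z_0) = \langle y,\ \widehat{A}_\gamma^{*}\phi_X(\bm x_0) + \widehat{B}_\gamma^{*}(\phi_U(\bm u_0)\otimes\phi_X(\bm x_0))\rangle_{\spIN}$, and substituting the operator forms while applying the reproducing property coordinate-wise, collapses the tensor factor into a Hadamard product: the bracketed lifted state becomes $\widehat{F}^{*}\bigl(\bm{k}_X(\bm x_0)\odot(\bm{1}+\bm{k}_U(\bm u_0))\bigr) = \widehat{F}^{*}\bm{\mathsf{z}}_1$, which is exactly the lifting $\bm{\mathsf{z}}(\cdot,\cdot)$ of Algorithm~\ref{alg:ckor}.

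\textbf{Third stage.} Propagating once more yields $\bm{\mathsf{z}}_2 = \widehat{F}^{*}(\bm{A} + \mathsfit{B}\,\bm{k}_U(\bm u_1))\bm{\mathsf{z}}_1$ with $\mathsfit{B}\bm{\mathsf{v}} \defeq \mathrm{diag}(\bm{\mathsf{v}})\bm{A}$, and a short induction on the horizon $H$ gives the closed form $\widehat{y}_H = \langle \widehat{F}y,\ \textstyle\prod_{k=1}^{H-1}(\bm{A} + \mathsfit{B}\,\bm{k}_U(\bm u_k))\,\bm{\mathsf{z}}_1\rangle$. Reading this column-wise over a vector observable $\bm{y}\in\RKHS_X^{n_y}$ and setting $\bm{C}^{\top} \defeq \bm{K}_\gamma^{-1}\bm{Y}_+$ produces precisely Algorithm~\ref{alg:ckorpred}. \textbf{The main obstacle} is the bookkeeping of the tensor-product range: one must verify that $\widehat{S}_U^{*}\circledcirc\ES_X^{*}$ is the sampled embedding associated with $\RKHS_U\otimes\spIN$ and that it contracts against elementary tensors as $(\widehat{S}_U v)\odot(\widehat{S}_X w)$. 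This single identity is what both reduces $\bm{K}_Z$ to $\bm{K}_X + \bm{K}_U\odot\bm{K}_X$ and collapses the input coupling into a $\mathrm{diag}(\bm{k}_U(\bm u_k))$ factor at every propagation step; once it is in place, the remainder is routine finite-dimensional linear algebra together with the one-line induction.
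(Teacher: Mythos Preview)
Your proposal is correct and follows essentially the same route as the paper's own proof: push-through to expose $\aES_Z\bm{K}_\gamma^{-1}\ES_+$, split along $\spOUT=\spIN\oplus(\RKHS_U\otimes\spIN)$ via the Khatri--Rao identity to get $\widehat A_\gamma,\widehat B_\gamma$ and $\bm K_Z=\bm K_X+\bm K_U\odot\bm K_X$, then unroll using $\widehat F^*=\aES_+\bm K_\gamma^{-1}$ and induction to land on Algorithms~\ref{alg:ckor}--\ref{alg:ckorpred}. The emphasis you place on the single Khatri--Rao contraction identity as the crux is exactly right.
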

 The above result shows that, after a sequence of control inputs the initial condition is transformed to $\{\bm{k}_{U}({\bm{u}_k})\}^{H-1}_{k=0},\bm{k}_{X}({\bm{x}_0})$, \textit{the prediction involves only matrix-vector multiplication}.
    While the above predictor is finite-dimensional, it is the exact representation of the infinite-dimensional finite-rank control Koopman operator \eqref{eq:KRR}. In turn, Proposition \ref{prop:cKORkrrSHORT} eliminates auxiliary regression problems, e.g., for state reconstruction, as the full-state is readily reconstructed by setting $\bm{y} = {\mathrm{id}}$. 
\begin{remark}[Bilinear state--space models]
If the control kernel is linear, $k_{U}(\bm{u},\bm{u}')=\bm{u}^\top \bm{u}'$, then 
$\bm{k}_{U}(\bm{u}) = U\bm{u}$ with $U=[\bm{u}^{(1)},\dots,\bm{u}^{(n)}]^\top$, and $\mathsfit{B}\,\bm{k}_{U}(\bm{u})
= \mathrm{diag}(U\bm{u})\bm{A}$. The prediction recursion from Algorithm \ref{alg:ckor} takes the form
\begin{align}\label{eq:BilinMOD}
    \bm{\mathsf{z}}_{k+1} = \bm{A}\bm{\mathsf{z}}_k + \textstyle \sum_{i=1}^{n_u} u_{k,i}\,\bm{B}_i \bm{\mathsf{z}}_k,
\end{align}
which is a bilinear state--space model with control channels $\{\bm{B}_i \defeq \mathrm{diag}(U\bm{e}_i)\bm{A}\in \Set{R}^{n \times n}\}^{n_u}_{i=1}$. For a general nonlinear kernel $k_{U}$, the same recursion 
$\bm{\mathsf{z}}_{k+1} = \big(\bm{A}+\mathsfit{B}\,\bm{k}_{U}(\bm{u}_k)\big)\bm{\mathsf{z}}_k$
is now bilinear in $(\bm{k}_{U}(\bm{u}),\bm{\mathsf{z}})$, presenting a kernelized generalization of bilinear state--space models.
\end{remark} 
\subsubsection*{Embracing infinite-dimensional regression}
    Although the finite sample solution of operator regression over infinite-dimensional RKHSs is itself infinite-dimensional, it is known
to be of \emph{finite rank}
\cite{muandet2017kernel,Mollenhauer2020,Mollenhauer2022,kostic2022learning}. This makes for a finite-dimensional range $\mathrm{dim(range}(\widehat{G}_\gamma)) < \infty$, but leaves the domain $\spIN$ intact and infinite-dimensional.
However, kernelized operator methods for control systems \cite{rosenfeld2024dynamic,morrison2023dynamic,Philipp2023b} discretize the domain by replacing the canonical inner product
$\langle\cdot,\cdot\rangle_{\mathcal H_X}$ with the inner product\footnote{%
With the identification
\(
  \mathcal{H}_{X}^{0}
  \equiv
  \bigl\{\sum_{i=1}^{n}c_i\,k_{X}(\,\cdot\,,\bm{x}_i):\bm c\in\mathbb R^{n}\bigr\},
\)
we have the inner product
\(
  \langle f,g\rangle_{\mathcal{H}_{X}^{0}}
  =
  \bm{f}^{\intercal}\bm{K}\bm{g},
\)
where $\bm{K}_{ij}=k_{X}(\bm{x}_i,\bm{x}_j)$ and
$f=\sum_i f_i k_{X}(\,\cdot\,,\bm{x}_i),\; g=\sum_i g_i k_{X}(\,\cdot\,,\bm{x}_i)$
\cite{IngoSteinwart2008SupportMachines,
      kanagawa2025gaussianprocessesreproducingkernels}.} induced on
a finite-dimensional \emph{pre‑RKHS} $\mathcal{H}^{0}_{X}\defeq\operatorname{span}\{\phi_X(\bm{x}_i)\equiv k_X(\,\cdot\,,\bm{x}_i)\}_{i=1}^n$, i.e. the span of kernel sections evaluated at the data points. The latter, in turn, discretizes the original problem, e.g., \eqref{eq:KRR}, and leads to the finite-dimensional optimization
        \begin{align}\label{eq:findimPRE}
      \min_{\substack{\innerprod{f}{Gg}=\mathsfit{T}(\innerprod{f}{g}) \\ f,g \in {\RKHS^0_X}}}\|{{G}}\|_{\mathrm{HS}} \longleftrightarrow 
      \min_{\bm{K}\bm{G} = \bm{T} }\|{\bm{G}}\|_{\mathrm{F}},
    \end{align}
    where $\mathsfit{T}(\cdot)$ encodes the action of the operator, $\bm{A},\bm{T}\in \mathbb{R}^{n\times n}$ are computed using kernel function evaluations and $\|\cdot\|_{F}$ denotes the Frobenius norm.
While~\eqref{eq:findimPRE} indeed produces a solution of finite rank, this is only because the search space itself is finite-dimensional. Such a formulation bypasses the inherent infinite-dimensional nature of the original regression problem~\eqref{eq:KRR} and consequently introduces an additional discretization error. Therefore,  the methods of \cite{rosenfeld2024dynamic,morrison2023dynamic,Philipp2023b} do not enjoy the approximation guarantees that we establish under minimal assumptions (Theorem~\ref{thm:opNorm} and Corollary~\ref{coro:Opred}).

\section{Efficient Approximations via Sketching}\label{sec:Sketch}
Like any other nonparametric approach, our cKOR algorithm is only suitable for small-scale systems due to the computational time-complexity of order $\mathcal{O}(n^3)$ w.r.t. the data cardinality $n$.
For approximating kernel methods, random Fourier features (RFFs) stand out as a popular and straightforward way to reduce the time-complexity of estimation \cite{Rahimi2007}. Recently, they have been utilized for control system identification \cite{Philipp2023b}. Unfortunately, the algorithm in \cite{Philipp2023b} is hardly useful in data-driven applications as it requires data gathered under constant inputs to estimate the Koopman operator for a few fixed input levels -- prohibiting most realistic system identification scenarios that involve rich excitation signals or safe data collection, e.g., under an auxiliary controller. Moreover, RFFs being data-independent, they may not adapt well to the data at hand, limiting performance for an equivalent complexity as sketching schemes \cite{pmlr-v162-chatalic22a}. 
Random sketching or Nystr\"{o}m approximations estimate the kernel matrix by selecting a small subset of ${{m}}$ data points known as inducing points or Nystr\"{o}m centres, which define a low-dimensional subspace of the RKHS, onto which the dataset is projected \cite{Williams2000,ahmad2023sketch}. The Nystr\"{o}m approximation is accurate under the assumptions that an appropriate sampling is carried out and the kernel matrix has a low rank, where the latter is often satisfied, e.g., for Gaussian kernels whose Gram matrix eigenvalue spectrum rapidly decays \cite{Williams2000b}. We remark that, concurrent to our work, \cite{Caldarelli2024} proposes sketched operator estimation, but only for the restrictive case of LTI RKHS dynamics that is recovered as a special case of our approach (cf. Table \ref{tab:cKORpower}).
 
To put our developments in perspective, recall that the popular parametric bEDMD \cite{Bruder2021b} or \cite{Guo2022} have the time-complexity of $\mathcal{O}(n_{\bm{\mathsf{z}}}^3(n_{{u}}+1)^3)$ -- where $n_{\bm{\mathsf{z}}}$ is the dictionary dimension -- due to a cubic magnification of complexity in case of control systems based on the dimensionality of the inputs. This is primarily due to a lack of a \textit{kernel trick} for the state-control couplings -- an inherent limitation of a parametric model. In contrast, the proposed combination of our nonparametric cKOR framework and random sketching that we will work out in this section preserves the ``kernel trick" at a reduced set of samples of cardinality $m$ to deliver a handy complexity reduction that is \textit{independent of input-dimensionality}, amounting to the time-complexity $\mathcal{O}({{m}}^3+{{m}}^2n)$, where ${{m}} \ll n$ is the number of inducing points -- mirroring the autonomous case \cite{Meanti2023}.

We will consider uniform random sampling because it is a simple algorithm that is generally applicable, and it is well-known that random projections are suitable for extracting a low-rank matrix approximation \cite{Tropp2017}. There exist more advanced sampling approaches, which have the potential to minimize the number of inducing points necessary, but reviewing these is out of scope for this work. Still, our sketched scalable estimators are not limited to uniform random sampling. We note that the concept of Nystr\"{o}m approximation is rigorously studied in the context of autonomous operators with KRR, PCR, and RRR estimators \cite{Meanti2023}. In the remainder of this section, we extend these results to \textit{control Koopman operator regression}. This approach starts by sampling a small subset of the data matrices with ${{m}}$ datapoints/columns, where ${{m}} \ll n$ is the number of inducing points. With these datasets, the {\em subsampling operators} $\nyES_X,\nyES_{+}: \spIN \rightarrow \Set{R}^{{{m}}}$ and $\nyES_Z: \spOUT \rightarrow \Set{R}^{{{m}}}$ are introduced to explicitly represent the orthogonal projections
\begin{subequations}\label{eq:projSketch}
    \begin{align}
   \projP_{+}\defeq& \adjoint{\nyES}_{+}(\adjoint{\nyES}_{+})^{\dagger}=\adjoint{\nyES}_{+}{\bm{K}}_{\widetilde{+}}^{\dagger}{\nyES}_{+}\defeq\widetilde{E}_+\widetilde{E}_+^*: \spIN\to {\spIN}\\
   \projP_Z\defeq& \adjoint{\nyES}_Z(\adjoint{\nyES}_Z)^{\dagger}=\adjoint{\nyES}_{Z}{\bm{K}}_{\widetilde{Z}}^{\dagger}{\nyES}_{Z}=\widetilde{E}_Z\widetilde{E}_Z^*: \spOUT \to{\spOUT},
\end{align} 
\end{subequations}
where ${\bm{K}}_{\widetilde{+}}\defeq \nyES_{{+}}\adjoint{\nyES}_{{+}}= [k_{X}(\bm{x}_+^{(i)},\bm{x}_+^{(j)})]_{i,j\in[m]}$, ${\bm{K}}_{\widetilde{Z}}\defeq \nyES_{{Z}}\adjoint{\nyES}_{{Z}}= [k_{Z}(\bm{z}^{(i)},\bm{z}^{(j)})]_{i,j\in[m]}$ and $\widetilde{E}_{({\cdot})}$ denote partial isometries\footnote{The input and output space partial isometries are $\widetilde{E}_+=\widetilde{S}_+^*(\widetilde{S}_+\widetilde{S}_+^*)^{\nicefrac{\dagger}{2}}\defeq \widetilde{S}_+^*{\bm{K}}_{\widetilde{+}}^{\nicefrac{\dagger}{2}},\widetilde{E}_Z=\widetilde{S}_{Z}^*(\widetilde{S}_{Z}\widetilde{S}_{Z}^*)^{\nicefrac{\dagger}{2}}\defeq \widetilde{S}_{Z}^*{\bm{K}}_{\widetilde{Z}}^{\nicefrac{\dagger}{2}}$, respectively.}, satisfying $\widetilde{E}_{({\cdot})}^*\widetilde{E}_{({\cdot})}=\bm{I}$. 
Similar to \cite{Meanti2023}, the variational problems \eqref{eq:infKRR}, \eqref{eq:KRR} are projected, using \eqref{eq:projSketch}, onto a lower-dimensional subspace, leading to \textit{sketched} risk minimization
    \begin{align}\label{eq:nyinfKRR}
 {\infEstim}_{m,{\reg}}
& {{\defeq}}\argmin_{\Estim \in \HS{\spIN,\spOUT}}{\!\!\!\!\!{\mathsfit{R}}(\projP_{Z}G\projP_{+}) {+} {\reg}\norm{\Estim}^{2}_{{\rm HS}}} \tag{${\txt{\textsc{s}r\textsc{rm}}}$}
\end{align} 
and its empirical risk counterpart
\begin{align}
\label{eq:nyKRR}
  \widehat{\infEstim}_{m,{\reg}}
& {{\defeq}}\argmin_{\Estim \in \HS{\spIN,\spOUT}}{\!\!\!\!\!\widehat{\mathsfit{R}}(\projP_{Z}G\projP_{+}) {+} {\reg}\norm{\Estim}^{2}_{{\rm HS}}}\notag\\
&= (\projP_Z\aES_Z \ES_Z\projP_Z+n{\reg} I)^{-1}\projP_Z\aES_Z \ES_+\projP_+.\tag{$\widehat{\txt{\textsc{s}r\textsc{rm}}}$}
\end{align}
\begin{algorithm}[t]
\caption{\textsc{Ny--cKOR Sketched Model Learning}}
\label{alg:nystrom-ckor}
\begin{algorithmic}
\algoSepIn
  \Statex {\bfseries Input:} Data $\{(\bm{x}^{(i)},\bm{u}^{(i)},\bm{x}^{(i)}_{+})\}_{i=1}^n$, subsample sets $\{\widetilde{\bm{x}}^{(i)}\}_{i=1}^m$, $\{\widetilde{\bm{u}}^{(i)}\}_{i=1}^m$, $\{\widetilde{\bm{x}}_+^{(i)}\}_{i=1}^m$, kernels $k_{X},k_{U}$, observable $\bm{y} \in \RKHS_X^{n_y}$, regularization $\gamma>0$.
  \algoSep
  \Statex{\texttt{/* Nonparametric Representation */}}
    \Statex \textbf{let:} ${\bm{I}}=\mathrm{diag}({\bm{1}})$, \; ${\bm{1}}\in\mathbb{R}^m$; $\widetilde{\bm{I}}=\mathrm{diag}(\widetilde{\bm{1}})$, \; $\widetilde{\bm{1}}\in\mathbb{R}^m$
    \Statex \textbf{let:} $\bm{k}_{\widetilde X}(\bm{x})_j=k_{X}(\bm{x},\widetilde{\bm{x}}^{(j)})$, \; 
                         $\bm{k}_{\widetilde U}(\bm{u})_j=k_{U}(\bm{u},\widetilde{\bm{u}}^{(j)})$
    \Statex \textbf{let:} $(\bm{K}_{\widetilde X})_{ij}=k_{X}(\widetilde{\bm{x}}^{(i)},\widetilde{\bm{x}}^{(j)})$,\;
                         $(\bm{K}_{X\widetilde X})_{ij}=k_{X}(\bm{x}^{(i)},\widetilde{\bm{x}}^{(j)})$
    \Statex \textbf{let:} $(\bm{K}_{\widetilde U})_{ij}=k_{U}(\widetilde{\bm{u}}^{(i)},\widetilde{\bm{u}}^{(j)})$, \;
                         $(\bm{K}_{U\widetilde U})_{ij}=k_{U}(\bm{u}^{(i)},\widetilde{\bm{u}}^{(j)})$
    \Statex \textbf{let:} $(\bm{K}_{+\widetilde{+}})_{ij}=k_{X}(\bm{x}_+^{(i)},\widetilde{\bm{x}}_+^{(j)})$, \;
                         $(\bm{K}_{\widetilde{+}})_{ij}=k_{X}(\widetilde{\bm{x}}_+^{(i)},\widetilde{\bm{x}}_+^{(j)})$
    \Statex \textbf{let:} $\widetilde{\bm{K}}_+\in\Set{R}^{m\times m}$, \;
                         $(\widetilde{\bm{K}}_+)_{ij}=k_{X}(\widetilde{\bm{x}}_+^{(i)},\widetilde{\bm{x}}^{(j)})$
    \Statex \textbf{let:} $\bm{K}_{\widetilde Z}=\bm{K}_{\widetilde X}\odot(\widetilde{\bm{1}}\widetilde{\bm{1}}^\top+\bm{K}_{\widetilde U})$
    \Statex \textbf{let:} $\bm{K}_{Z\widetilde Z}=\bm{K}_{X\widetilde X}\odot(\bm{1}\widetilde{\bm{1}}^\top+\bm{K}_{U\widetilde U})$
  \algoSep
  \Statex{\texttt{/* Model Learning */}}
    \Statex \textbf{compute:} $\bm{H}=\bm{K}_{Z\widetilde Z}^\top\bm{K}_{Z\widetilde Z}+n\gamma\,\bm{K}_{\widetilde Z}$
    \Statex \textbf{compute:} ${\widetilde{\bm{K}}_{\gamma}}^{-1}=\bm{H}^\dagger\bm{K}_{Z\widetilde Z}^\top\bm{K}_{+\widetilde{+}}\bm{K}_{\widetilde{+}}^\dagger$
    \Statex \textbf{compute:} $\bm{A}=({\widetilde{\bm{K}}_{\gamma}}^{-1}\widetilde{\bm{K}}_+)^\top$
    \Statex \textbf{let:} $\mathsfit{B}\bm{\mathsf v}=\mathrm{diag}(\bm{\mathsf v})\bm{A} \in \Set{R}^{m\times m}$
    \Statex \textbf{let:} $\bm{\mathsf z}(\bm{x},\bm{u})=\bm{k}_{\widetilde X}(\bm{x})\odot(\widetilde{\bm{1}}+\bm{k}_{\widetilde U}(\bm{u})) \in \Set{R}^{m}$
  \algoSep
  \Statex{\texttt{/* Inference */}}
    \Statex \textbf{let:} $\widetilde{\bm{Y}}_+=[\bm{y}(\widetilde{\bm{x}}_+^{(1)}),\dots,\bm{y}(\widetilde{\bm{x}}_+^{(m)})]^\top$
    \Statex \textbf{compute:} $\bm{C}=({\widetilde{\bm{K}}_{\gamma}}^{-1}\widetilde{\bm{Y}}_+)^\top \in \Set{R}^{m\times n_y}$
  \algoSep
  \Statex {\bfseries Return:} matrices $(\bm{A},\bm{C})$, map $\mathsfit{B}$, lifting $\bm{\mathsf z}(\cdot,\cdot)$.
\end{algorithmic}
\end{algorithm}
Same as before, we will arrive at an exact finite-dimensional reformulation after some algebra.
\begin{proposition}\label{prop:NYcKORkrrSHORT}
 Consider control $\RKHS_U$ and state $\RKHS_X$ RKHSs, forming the domain $\RKHS_X$ and range $\RKHS_Z \defeq \spIN \oplus (\RKHS_U \otimes \RKHS_X)$ for $G$ in \eqref{eq:MSEriskmin}. Given an observable of interest $\bm{y} \in \RKHS^{n_y}_X$, the repeated application of \eqref{eq:nyKRR} is equivalent to a finite-dimensional state-space model defined by Algorithm \ref{alg:nystrom-ckor} \& \ref{alg:ckorpred}.
\end{proposition}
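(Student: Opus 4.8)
We reproduce the derivation of Proposition~\ref{prop:cKORkrrSHORT} with the full‑sample operators \eqref{eq:sampl} replaced by the sketched ones, the only extra work being to reduce the projectors appearing in \eqref{eq:nyKRR} to finite matrices. First I would substitute the partial‑isometry factorizations $\projP_Z=\widetilde{E}_Z\widetilde{E}_Z^*$ and $\projP_+=\widetilde{E}_+\widetilde{E}_+^*$ from \eqref{eq:projSketch} (recall $\widetilde{E}_{(\cdot)}^*\widetilde{E}_{(\cdot)}=\bm{I}$) into \eqref{eq:nyKRR} and apply the push‑through identity $(AB+n\reg I)^{-1}A=A(BA+n\reg I)^{-1}$ with $A=\widetilde{E}_Z$ to pull the infinite‑dimensional inverse onto the $m$‑dimensional coordinate space:
\[\widehat{\infEstim}_{m,\reg}=\widetilde{E}_Z\bigl(\widetilde{E}_Z^*\aES_Z\ES_Z\widetilde{E}_Z+n\reg I\bigr)^{-1}\widetilde{E}_Z^*\aES_Z\ES_+\widetilde{E}_+\widetilde{E}_+^*.\]
Inserting $\widetilde{E}_Z=\adjoint{\nyES}_Z\bm{K}_{\widetilde{Z}}^{\nicefrac{\dagger}{2}}$, $\widetilde{E}_+=\adjoint{\nyES}_+\bm{K}_{\widetilde{+}}^{\nicefrac{\dagger}{2}}$ together with the reproducing‑property identities $\nyES_Z\aES_Z=\bm{K}_{Z\widetilde Z}^\top$, $\ES_Z\adjoint{\nyES}_Z=\bm{K}_{Z\widetilde Z}$, $\ES_+\adjoint{\nyES}_+=\bm{K}_{+\widetilde{+}}$, and then collapsing $\bm{K}_{\widetilde{Z}}^{\nicefrac{\dagger}{2}}\bigl(\bm{K}_{\widetilde{Z}}^{\nicefrac{\dagger}{2}}M\bm{K}_{\widetilde{Z}}^{\nicefrac{\dagger}{2}}+n\reg I\bigr)^{-1}\bm{K}_{\widetilde{Z}}^{\nicefrac{\dagger}{2}}=(M+n\reg\bm{K}_{\widetilde{Z}})^{\dagger}$ with $M=\bm{K}_{Z\widetilde Z}^\top\bm{K}_{Z\widetilde Z}$, one arrives at the compact form $\widehat{\infEstim}_{m,\reg}=\adjoint{\nyES}_Z\,\widetilde{\bm{K}}_{\reg}^{-1}\,\nyES_+$ with $\bm{H}=\bm{K}_{Z\widetilde Z}^\top\bm{K}_{Z\widetilde Z}+n\reg\bm{K}_{\widetilde{Z}}$ and $\widetilde{\bm{K}}_{\reg}^{-1}=\bm{H}^{\dagger}\bm{K}_{Z\widetilde Z}^\top\bm{K}_{+\widetilde{+}}\bm{K}_{\widetilde{+}}^{\dagger}$ --- exactly the quantities assembled in Algorithm~\ref{alg:nystrom-ckor}.

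Next, mirroring \eqref{eq:KRRmain}, I would use the direct‑sum structure $\spOUT=\spIN\oplus(\RKHS_U\otimes\RKHS_X)$ to split $\widehat{\infEstim}_{m,\reg}^{*}$ into an $\RKHS_X$‑block $\nyEAEstim$ and an $\RKHS_U\otimes\RKHS_X$‑block $\nyEBEstim$, the Khatri--Rao identity $(\nyES_U\circledcirc\nyES_X)\adjoint{(\nyES_U\circledcirc\nyES_X)}=(\nyES_U\adjoint{\nyES}_U)\odot(\nyES_X\adjoint{\nyES}_X)$ yielding $\bm{K}_{\widetilde{Z}}=\bm{K}_{\widetilde X}\odot(\widetilde{\bm{1}}\widetilde{\bm{1}}^\top+\bm{K}_{\widetilde U})$, $\bm{K}_{Z\widetilde Z}=\bm{K}_{X\widetilde X}\odot(\bm{1}\widetilde{\bm{1}}^\top+\bm{K}_{U\widetilde U})$, and the lifted feature $\bm{\mathsf{z}}(\bm{x},\bm{u})=\bm{k}_{\widetilde X}(\bm{x})\odot(\widetilde{\bm{1}}+\bm{k}_{\widetilde U}(\bm{u}))\in\Set{R}^{m}$. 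From here the iterated one‑step map follows by the same induction on the horizon $H$ as in Proposition~\ref{prop:cKORkrrSHORT}, verbatim with $m$ in place of $n$: identifying the weighted forward embedding $\adjoint{\widehat{F}}=\adjoint{\nyES}_+(\widetilde{\bm{K}}_{\reg}^{-1})^{\top}$, the transition matrix $\bm{A}=(\widetilde{\bm{K}}_{\reg}^{-1}\bm{K}_{\widetilde{+}})^{\top}\in\Set{R}^{m\times m}$, the map $\mathsfit{B}\bm{\mathsf{v}}=\mathrm{diag}(\bm{\mathsf{v}})\bm{A}$, and $\bm{C}=(\widetilde{\bm{K}}_{\reg}^{-1}\widetilde{\bm{Y}}_+)^{\top}$ for vector‑valued $\bm{y}$ by applying the reproducing property column‑wise, which recovers Algorithm~\ref{alg:nystrom-ckor} and the predictor of Algorithm~\ref{alg:ckorpred}.

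The step I expect to be the main obstacle is the bookkeeping of Moore--Penrose pseudoinverses: unlike the full‑sample case, where $+\,n\reg\bm{I}$ renders every Gram matrix boundedly invertible, the sketched normal operator is invertible only on an $m$‑dimensional subspace, so the partial‑isometry push‑through and the collapse $\bm{K}^{\nicefrac{\dagger}{2}}(\cdot)^{-1}\bm{K}^{\nicefrac{\dagger}{2}}\mapsto(\cdot)^{\dagger}$ have to be justified through the range inclusions $\mathrm{range}(\bm{K}_{Z\widetilde Z}^\top)\subseteq\mathrm{range}(\bm{K}_{\widetilde{Z}})$ and $\mathrm{range}(\bm{K}_{+\widetilde{+}}^\top)\subseteq\mathrm{range}(\bm{K}_{\widetilde{+}})$. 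These hold because $\bm{K}_{Z\widetilde Z}^\top=\nyES_Z\aES_Z$ and $\bm{K}_{\widetilde{Z}}=\nyES_Z\adjoint{\nyES}_Z$ share the left factor $\nyES_Z$ (using $\mathrm{range}(AA^{*})=\mathrm{range}(A)$), and analogously for the successor blocks; a secondary point of care is tracking which subsample set indexes the lifted state across the recursion so that the re‑lifting step reproduces $\bm{A}$ as in Algorithm~\ref{alg:nystrom-ckor}. Once these are settled, the remainder is a routine rerun of the algebra in Proposition~\ref{prop:cKORkrrSHORT}.
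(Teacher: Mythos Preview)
Your proposal is correct and follows essentially the same route as the paper: substitute the partial-isometry factorizations of $\projP_Z,\projP_+$, apply the push-through identity with $\widetilde{E}_Z^{*}\widetilde{E}_Z=\bm{I}$ to reduce \eqref{eq:nyKRR} to $\adjoint{\nyES}_Z\,\widetilde{\bm{K}}_{\reg}^{-1}\,\nyES_+$, and then rerun the induction of Proposition~\ref{prop:cKORkrrSHORT}. The paper is terser---it delegates the pseudoinverse collapse $\bm{K}_{\widetilde Z}^{\nicefrac{\dagger}{2}}(\cdot)^{-1}\bm{K}_{\widetilde Z}^{\nicefrac{\dagger}{2}}\mapsto(\cdot)^{\dagger}$ to \cite[Proposition~C.2]{Meanti2023} rather than verifying the range inclusions you spell out---but the argument is the same.
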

\begin{proof}
Writing out the estimator \eqref{eq:nyKRR} gives
\begin{subequations}
    \begin{align}
& (\projP_Z\aES_Z \ES_Z\projP_Z+n{\reg} I)^{\!\!^{-1}}\projP_Z\aES_Z \ES_+\projP_+ \qquad \qquad \\
=& \widetilde{E}_Z({{\bm{K}}_{\widetilde{Z}}^{\nicefrac{\dagger}{2}}}\bm{K}^\top_{{Z}\widetilde{Z}}\bm{K}_{{Z}\widetilde{Z}}{{\bm{K}}_{\widetilde{Z}}^{\nicefrac{\dagger}{2}}}{+}n{\reg} I)^{\!\!^{-1}}\widetilde{E}_Z^*\aES_Z \ES_+\widetilde{E}_+\widetilde{E}_+^* \\
=& \widetilde{S}_{Z}^*{(\bm{K}^\top_{{Z}\widetilde{Z}}\bm{K}_{{Z}\widetilde{Z}}{+}n{\reg} {\bm{K}}_{\widetilde{Z}})^{\dagger} \bm{K}^\top_{{Z}\widetilde{Z}}\bm{K}_{{+}\widetilde{+}}{\bm{K}}_{\widetilde{+}}^{{\dagger}}}\widetilde{S}_{+}
\end{align}
\end{subequations}
where we used $P_Z=\widetilde{E}_Z\widetilde{E}_Z^*$ and the push-through identity together with $\widetilde{E}_Z^*\widetilde{E}_Z=\bm{I}$. By following the proof of \cite[Proposition C.2]{Meanti2023} we have ${{{\widetilde{\bm{K}}_{\gamma}}^{-1}}}{\defeq}  (\bm{K}^\top_{{Z}\widetilde{Z}}\bm{K}_{{Z}\widetilde{Z}}+n{\reg} {\bm{K}}_{\widetilde{Z}})^{\dagger} \bm{K}^\top_{{Z}\widetilde{Z}}\bm{K}_{{+}\widetilde{+}}{\bm{K}}_{\widetilde{+}}^{{\dagger}}$. Finally, we arrive at Algorithm \ref{alg:ckorpred} via Algorithm \ref{alg:nystrom-ckor} following the proof of Proposition \ref{prop:cKORkrrSHORT}.
\end{proof}
\section{Lifted State-Dimension Reduction}\label{sec:POD}
For cKOR and Ny-cKOR, the lifted state dimension scales with the dataset cardinality $n$ and inducing points ${m}$, respectively.
For relatively few inducing points, the lifted state dimension can still be high, posing challenges for efficient control design and real-time execution on low-level hardware.

A compelling approach for ordering and reducing the lifted states is based on proper orthogonal mode decomposition (POD), because it has been successfully applied in obtaining low-dimensional representations based on large-scale datasets in many applications \cite{Chatterjee2000}. The dynamic mode decomposition (DMD) algorithm actually makes use of this reduction \cite{schmid_2010} where it involves taking an SVD of the state ``data matrix" -- analogous to the SVD of the empirical embedding operator $\adjoint{\ES}_X$ in our RKHS setting, which ranks the orthogonal structures of this matrix based on the singular values. With this ranking, the $r$ dominant modes/coordinates can be selected to describe the dynamical behavior of the underlying system. 

The aforementioned reduction approach in the context of cKOR (and Ny-cKOR), starts by taking the $r$-truncated SVD of the kernel matrix $\SVDr{\bm{K}}= \bm{V}_r \bm{\Sigma}_r \bm{V}_r^{\intercal}$, with the POD modes $\bm{V}_r \in \Set{R}^{N_{\bm{z}} \times r}$ and $\bm{\Sigma}_r=\text{diag}(\sigma_1,\dots, \sigma_r)$.\footnote{We select $r$ thorough a threshold 
$\tau$ such that  $\nicefrac{\sum^{r}_{i=1}\sigma_i^{2}}{\sum^{n}_{j=1}\sigma_j^{2}} \cdot 100\% \leq \tau$, but there are various other methods for choosing $r$, see \cite{FALINI2022100064}.} With these POD modes, a bilinear lifted dynamics \eqref{eq:BilinMOD} can be transformed into the following reduced form ${{{\bm{\mathsf{z}}}}}_{k+1} = (\bm{V}_r^{\intercal}{\bm{{A}}}\bm{V}_r+ \lilsum^{n_u}_{i=1}\innerprod{\bm{e}_i}{\bm{u}_k}{\bm{V}_r^{\intercal}{\bm{{B}}}_i}\bm{V}_r){{{\bm{\mathsf{z}}}}}_k$.
In context of cKOR, this method is coined as reduced cKOR ($r$-cKOR). For Ny-cKOR, the reduction approach is identical to cKOR, with the difference that the truncated SVD is applied to the kernel matrix $\SVDr{\widetilde{\bm{K}}}$. In this case, the method is coined as reduced Ny-cKOR ($r$-Ny-cKOR).
\section{Numerical Examples}\label{sec:NumEx}
In this section, we present numerical studies to illustrate the implications of the theoretical results and showcase the advantages of our cKOR approach in practice. For comparison, the bilinear EDMD baseline \cite{Bruder2021,otto2023learning} is used with the same data- or subsample-based dictionary $\mathrm{span}\{k_{X}(\bm{x}^{(1)},\bm{x}),\dots,k_{X}(\bm{x}^{(n)},\bm{x})\}$. To match the baselines we use a linear control kernel $k_{U}(\bm{u},\bm{u}^\prime) = \innerprod{\bm{u}}{\bm{u}^\prime}$.
\preSection\subsection{Model learning for the Duffing oscillator}\label{subsec:2_num_examples}
As the first example, a controlled damped Duffing oscillator described by the state-space representation 
\begin{equation}
    \bm{\Dot{x}}=\begin{bmatrix}
    x_2 \\ x_1-x_1^{3}-0.5x_2
    \end{bmatrix}+\begin{bmatrix}
    0 \\ 2+\sin{(x_1)}
    \end{bmatrix}u,\label{eq:Control_Damp_Duff}
\end{equation}
is used, where the state is simulated using RK4 numerical integration with a step size $T_{\mathrm{s}} = 0.01s$ and $\bm{x}$ is sampled with the same $T_{\mathrm{s}}$. The control is applied in a synchronized zero-order-hold (ZOH) manner.
\subsubsection*{Prediction performance w.r.t. hyperparameters} First we compare the cKOR, Ny-cKOR and bEDMDc approaches over a range of the hyperparameters ${\mu}\in \Set{R}_{+}$ of the Gaussian (RBF) kernel $k_{X}(\bm{x},\bm{x}^\prime)=\operatorname{e}^{\nicefrac{-\|\bm{x}-\bm{x}^\prime\|_{2}^{2}}{{\mu}}}$ in terms of the 
$T_{\mathrm{test}}$-ahead prediction performance, quantified using the root mean square error (RMSE) $({\nicefrac{1}{T_{\mathrm{test}}}\sum_{t=1}^{T_{\mathrm{test}}} \lVert \bm{y}_{t}-\bm{\hat{y}}_{t}\rVert^2_2})^{\nicefrac{1}{2}}$, where $T_{\mathrm{test}}$ denotes the number of steps and $\bm{y}_{t}-\bm{\hat{y}}_{t}$ is the difference between the true system response and the predicted solution. For the training dataset, $200$ trajectories with $n=1000$ samples ($10$ sec) are generated, starting from a $14$ by $14$ grid of initial conditions within the limits $|x_1|\leq 2.25$ and $|x_2|\leq 2.25$. For the test dataset, 40 trajectories of $T_{\mathrm{test}}=100$ samples ($1s$) are generated, starting from random initial conditions sampled within the limits $|x_1|\leq 2$ and $|x_2|\leq 2$ using a uniform distribution. Both datasets are generated using uniform random input sequences within the interval $[-2,2]$. From the training dataset, ${{m}}=200$ inducing points are randomly sampled. The considered approaches are used with a regularization parameter $\gamma=10^{-9}$. 
Figure \ref{fig:RMSE_vs_mu_cKOR_bEDMDc_full} confirms the superior accuracy of our nonparametric cKOR estimator as it reaches a significantly lower error than bEDMDc across ${\mu}$ values -- showing a greater hyperparameter range of increased accuracy. Additionally, when lowering the regularization, the accuracy of our cKOR estimate increases, achieving up to an \textit{order of magnitude better accuracy than bEDMDc}, cf. Appendix \ref{app:appendix_exp_results}. 
Even when we reduce complexity by projecting on a subset of data for our sketched Ny-cKOR estimator, a similar, but slightly reduced advantage can be observed in Figure \ref{fig:RMSE_vs_mu_Ny_cKOR_bEDMDc_tilde}. 
\begin{figure}[t]
\centering
  \centering
  \includegraphics[width=\plotfactor\columnwidth]{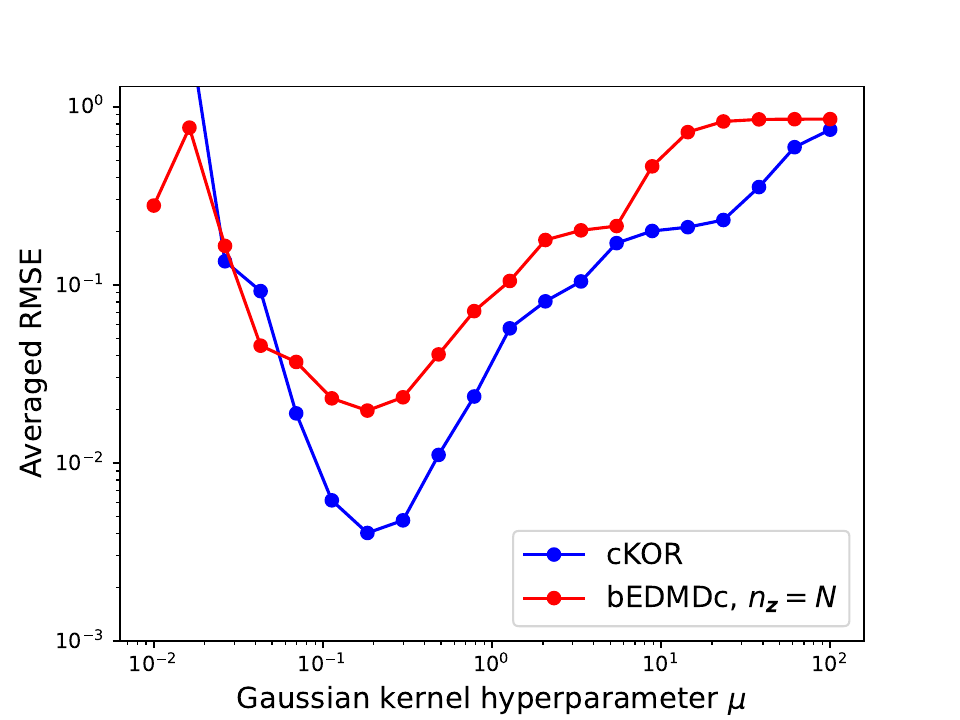}
  \caption{Averaged RMSE of the 1-step-ahead prediction for the cKOR, Ny-cKOR, and bEDMDc models over the test set for various choices of the kernel width ${\mu}$ for the case of predictor dimension $n_{\mathsf{\bm{z}}}=1000$.}
  \label{fig:RMSE_vs_mu_cKOR_bEDMDc_full}
    \postFloat
\end{figure}%
\begin{figure}
  \centering
  \includegraphics[width=\plotfactor\columnwidth]{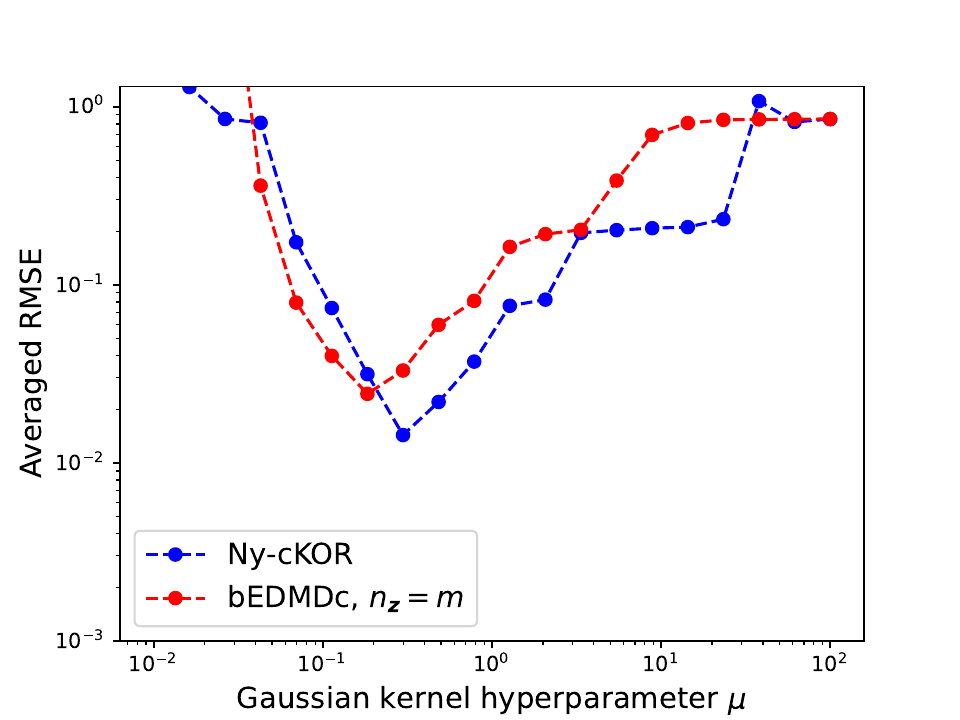}
  \caption{Averaged RMSE of the 1-step-ahead prediction for the cKOR, Ny-cKOR, and bEDMDc models over the test set for various choices of the kernel width ${\mu}$ for the case of predictor dimension $n_{\mathsf{\bm{z}}}=200$.}
  \label{fig:RMSE_vs_mu_Ny_cKOR_bEDMDc_tilde} 
\label{fig:RMSE_vs_mu_cKOR_bEDMDc_full_and_RMSE_vs_mu_Ny_cKOR_bEDMDc_tilde}
  \postFloat
\end{figure}
\looseness=-1
\subsubsection*{Statistical performance \& time-complexity evaluation} Next, the training data is varied in terms of the number of samples. To generate the data snapshots, the initial conditions are sampled from a square and equidistant grid within the limits $|x_1|, |x_2|\leq 2$ and the system is driven for $2.0s$ by randomly generated control inputs following a uniform distribution within the limits $|u|\leq 2$. The test data consists of 20 trajectories of length $2.0$ sec ($T_{\mathrm{test}}=200$) with the same initial condition generation, but driven by an input sequence of $u_t=2\sin{(10 \pi t)}$. 
\begin{figure}[t]
\centering
  \centering
  \includegraphics[width=\plotfactor\columnwidth]{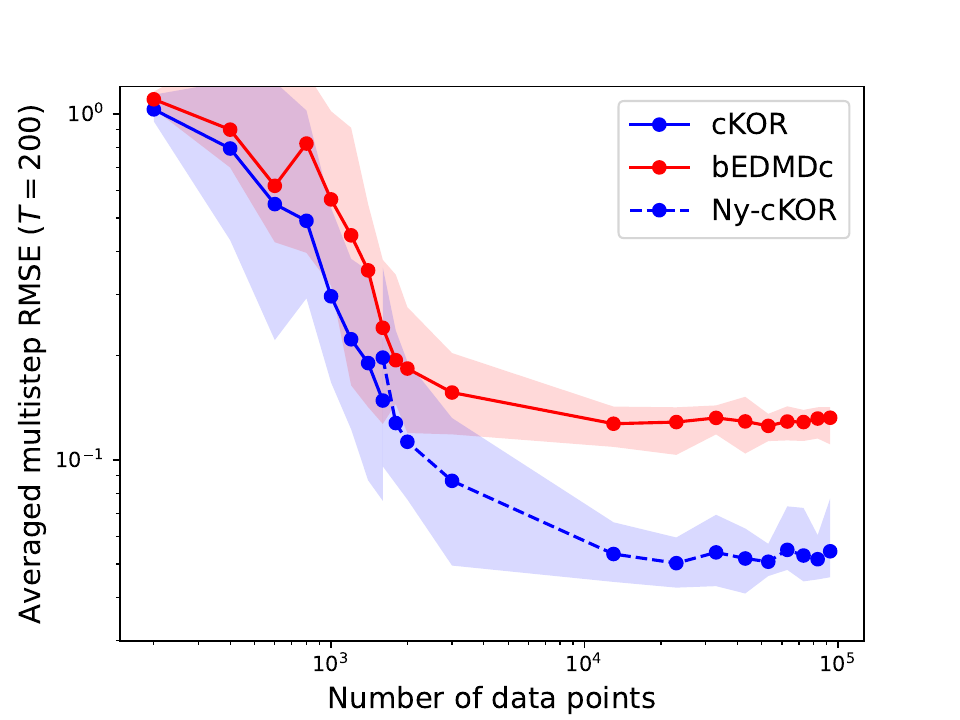}
  \caption{RMSE of the 200-step-ahead prediction for the cKOR, Ny-cKOR and bEDMDc models w.r.t. increasing training data size. Our (Ny)-cKOR estimators attain significantly lower errors.}
  \label{fig:RMSE_vs_N_D_bKDMDc_bEDMDc_KRR_Ny}
  \postFloat
\end{figure}%
\begin{figure}
  \centering
  \includegraphics[width=\plotfactor\linewidth]{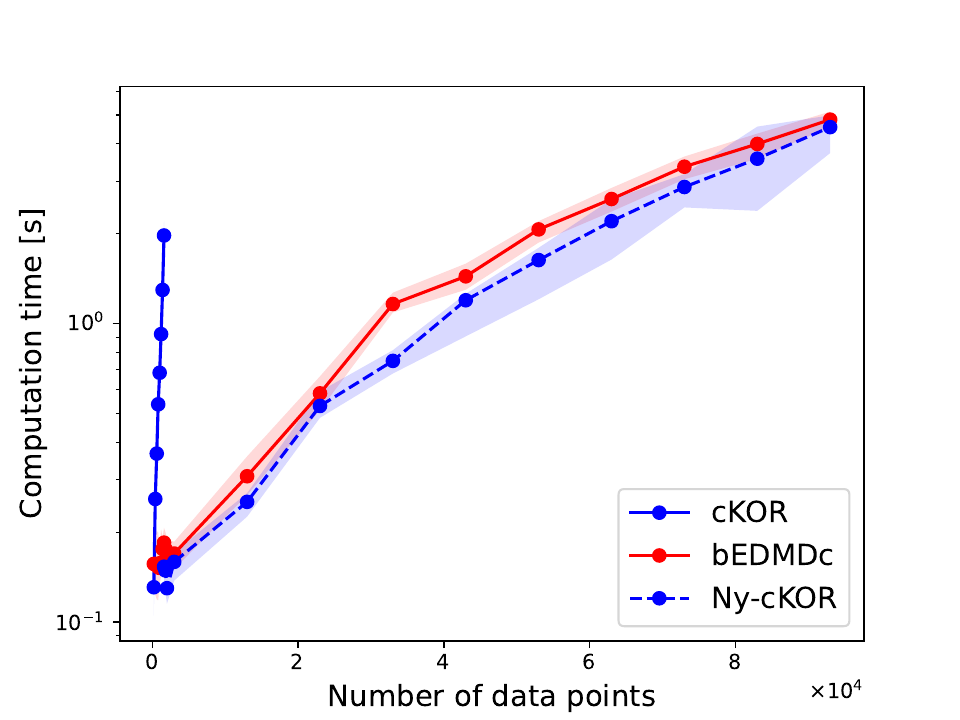}
  \caption{Computation time for the cKOR,
Ny-cKOR and bEDMDc models over increasing training data size. Our Ny-cKOR estimator effectively attains the complexity of the parametric approach.}
  \label{fig:Time_vs_N_D_bKDMDc_bEDMDc_KRR_Ny}
\label{fig:test1}
  \postFloat
\end{figure}
\looseness=-1
All the approaches use the Gaussian kernel to construct the lifted states with hyperparameter ${\mu}=0.25$ and a regularization parameter of $10^{-7}$. These values are empirically determined as ``optimal'' for the prediction RMSE, and choosing the same settings allows for a fair comparison between cKOR, Ny-cKOR, and bEDMDc. Note that the bEDMDc approach takes the inducing points as centers, which are 200 uniformly randomly sampled points from the training dataset. 
Figure \ref{fig:RMSE_vs_N_D_bKDMDc_bEDMDc_KRR_Ny} illustrates the RMSE of the $T_{\mathrm{test}}$-step-ahead prediction averaged over the test trajectories versus the number of training datapoints. The solid lines represent the average RMSE over 20 runs and the shaded area gives the variation of the RMSE per run. For each run, a new training data set and inducing points are generated to provide statistically relevant results. Figure \ref{fig:Time_vs_N_D_bKDMDc_bEDMDc_KRR_Ny} shows the computation times, i.e., the estimation time of the predictor along with the $n$-step-ahead prediction/rollout time. Strikingly, both the average RMSE of cKOR and Ny-cKOR stay below that of bEDMDc, confirming the inherent advantages of estimators derived using a nonparametric paradigm. This also illustrates the common bottleneck of full KRR estimators well over the number of datapoints, since the full cKOR scales with $\mathcal{O}(n^3)$, compared to bEDMDc with $\mathcal{O}((n_{\mathsf{\bm{z}}}(n_{{u}}+1))^3)$ and Ny-cKOR $\mathcal{O}({{m}}^3)$. In line with our expectation, Ny-cKOR continues the trend of cKOR for larger datasets, as the computation time becomes intractable for the full cKOR estimator. 
 For a single input, the bEDMDc complexity is comparable to Ny-cKOR, it is important to stress that Ny-cKOR is substantially more computationally efficient than bEDMDc for higher input dimensions -- by a factor of $(n_{{u}}+1)^3$ -- as it does not require taking a tensor product of features and inputs.
\preSection\subsection{Learning the high-dimensional Kalman vortex street}
\begin{figure*}[t!]
\centering
\includegraphics[width=0.7
\linewidth]{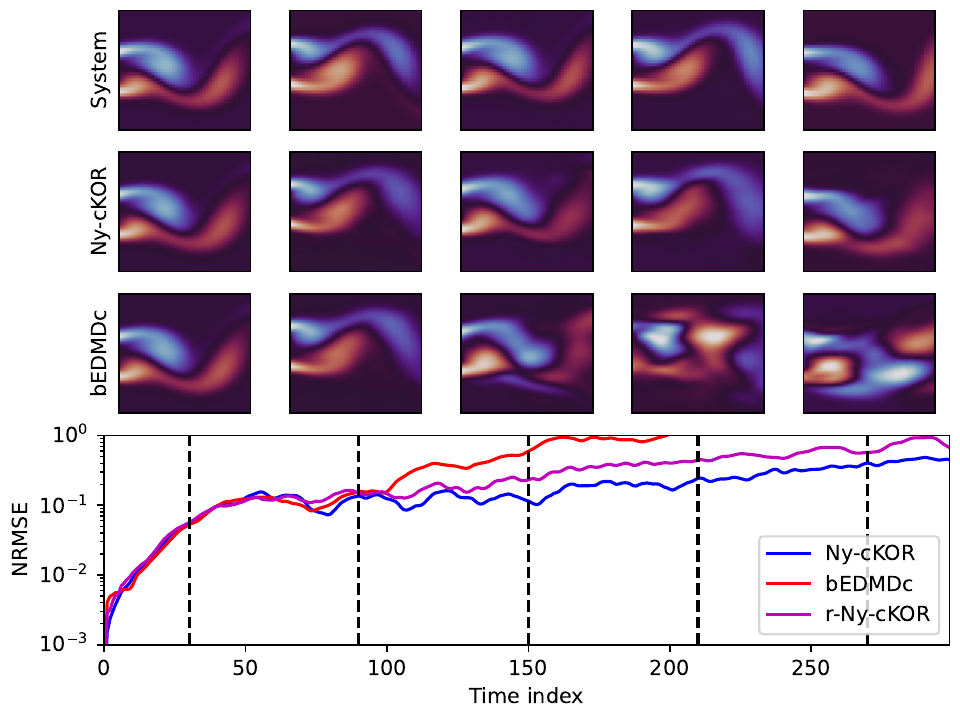}
  \caption{A prediction of the flow in the Kalman Vortex Street example, showing the reduced prediction error of our higher fidelity Ny-cKOR models.}
  \label{fig:Fluid_RMSE_over_time_vis_test0_225} 
  \postFloat
\end{figure*}
Tackling high-dimensional systems in a parametric manner often comes with challenges, as the suitable basis for representation is of critical importance.
Through this simulation study, we want to showcase the superiority of our nonparametric learning paradigm even when it is based on a fraction of the data points, which is highly important for scalability.  
 \begin{table}[t!]
  \centering
  \normalsize
      \caption{Test normalized RMSE (NRMSE) for the estimated models on the actuated Karman vortex street for multi-step prediction over $T_{\text{test}}=3T_{\text{valid}}$ from $20$ test-validation-train splits.
    }
    \begin{tabular}{r|ccc}
        \toprule
        & \textbf{{Ny-cKOR}}   & \textbf{$r$-Ny-cKOR}  &  bEDMDc   \\ 
        \midrule
         \!\!\textsc{nrmse} & $\bm{0.157}\bm{{\pm} 0.049}$  &  $\bm{0.206{\pm} 0.109}$  &  $0.412{\pm} 0.976$\\
        \bottomrule
    \end{tabular}
\label{tab:PredErr}
    \postFloat
\end{table}
 Figure \ref{fig:Flow_Oscillating_Cylinder} schematically illustrates the considered nonlinear system, which generates a flow as a result of transverse non-slip movement of an oscillating cylinder as input. This flow exhibits vortex shedding, causing vortex-induced vibrations on the structure, which accelerate material fatigue and may lead to failure \cite{Bearman1984}. In \cite{DECUYPER2020}, the considered system is created and simulated using the Computational Fluid Dynamics (CFD) environment OpenFOAM. For the data generation, we refer to \cite{DECUYPER2020}. The data-driven model learning is performed with the same setting as in \cite{BeintemaPhD2024}. 
 \begin{wrapfigure}{l}{\fpeval{\plotfactor*0.5}\linewidth}
\centering
\includegraphics[width=0.99\linewidth]{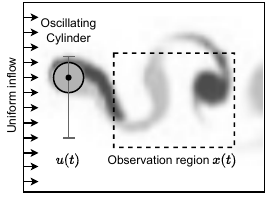}
\postFloat
  \caption{An oscillating cylinder in a uniform flow.}
  \label{fig:Flow_Oscillating_Cylinder}
\end{wrapfigure} For completeness, these settings are repeated here: the pressure, horizontal velocity and vertical velocity are observed in the rectangular wake region of $41 \times 45$, which amounts to a high state dimension of $n_{{x}}=5535$. The flow conditions correspond to a Reynolds number of $100$ and a Strouhal number of $0.167$. The dataset consists of 11 timeseries of length $T = 1520$, with samples measured at $50\text{Hz}$. These 11 timeseries correspond to 11 input sequences with a swept-sine input profile with different amplitudes and cylinder diameter ratios. The timeseries are randomly split into: 6 series for training, 3 for validation, and 2 for testing. For a statistically significant comparison, we assess the prediction performance of the methods on the test data over 20 randomly assigned splits. The training, validation, and test datasets are normalized to constrain the states and inputs to values $\leq 1$.
 For a fair comparison, all approaches use the same 400 inducing points for learning, meaning bEDMDc uses 400 RBF centres based on the inducing points of Ny-cKOR (and $r$-Ny-cKOR). Also all models are fitted using a hyperparameter and regularizer grid search on the validation data with the grids ${\mu} \in \{0.1, 0.5, 1, 10, 20, \ldots 60, 150, 175, \ldots 400\}$ and $\gamma \in \{10^{-11}, 10^{-10}, \ldots, 10^{-6}\}$ for the multi-step ($T_{\text{valid}}=100$) state prediction RMSE. The rank $r$ of the POD reduction is obtained for $\tau = 99.99\%$ (cf. Section \ref{sec:POD}). 

As shown in Table \ref{tab:PredErr}, on average, both models from $r$-Ny-cKOR and Ny-cKOR significantly outperform those of bEDMDc. Strikingly, \textit{the error variance} compared to bEDMDc \textit{for our Ny-cKOR and r-Ny-cKOR models is $20{\times}$ and $10{\times}$ smaller}, respectively. Figure \ref{fig:Fluid_RMSE_over_time_vis_test0_225} shows examples of significant performance loss of bEDMDc models due to a large error variance. By just comparing the flow plots, it becomes clear that bEDMDc quickly deteriorates and does not resemble any aspect of the flow, as opposed to Ny-cKOR, which stays quite accurate over the entire horizon. The reduced model of $r$-Ny-cKOR comes with an offset to Ny-cKOR, but does not exhibit the performance loss of bEDMDc. 
This example demonstrates the superiority of the nonparametric paradigm through the significantly better prediction accuracy of our ($r$-)Ny-cKOR models for an unknown high-dimensional control system. 
\preSection\subsection{Model predictive control with cKOR predictors}
Here we integrate our control operator predictors from (Proposition \ref{prop:cKORkrrSHORT} and \ref{prop:NYcKORkrrSHORT}) in an \textit{iterated} LPV-MPC scheme \cite{hoekstra2023computationally}, that we call \textit{control Koopman operator LPV-MPC} ({c\textsc{k}o\textsc{lpv-mpc}}), whose description is delegated to Appendix \ref{sec:co_LPV_MPC}. In a nutshell, we solve a single QP at every time-instant where the {c\textsc{k}o\textsc{lpv-mpc}} updates the scheduling iteratively over the simulation time in a receding horizon manner. This may cause some loss of performance, but convergence is still observed in practice, similar to SQP schemes. 
\subsubsection*{Damped Duffing oscillator} 
First, we consider the MPC design for the Duffing oscillator (\ref{eq:Control_Damp_Duff}), whose autonomous dynamics exhibit two stable equilibrium points while the origin is unstable. Figure~\ref{fig:Duffing_State_plot_full} shows the vector field of the system, illustrating whether the control trajectories exploit the dynamics to achieve performance. The simulated scenario starts from the initial condition $[1\ 1]^{\intercal}$, with a state reference switching between $[-1\ 0]^{\intercal}$ for $9s$ and $[0\ 0]^{\intercal}$ for $3s$. The weighting matrices are chosen as $\bm{Q}=\text{diag}(6,1)$, $R=5$, and $\bm{Q}_T=100\bm{Q}$, with a horizon of $T=100$ steps and sampling time $T_{\mathrm{s}}=0.01s$. The constraints are $-2 \leq u \leq 2$, $-3 \leq x_1 \leq 3$, and $-3 \leq x_2 \leq 3$, which reward model accuracy while tracking the set points.

As MPC baselines, we include a linear MPC ({\textsc{lmpc}}) obtained by linearizing the system around the origin, a sequential linearization-based predictive control approach\footnote{It relies on linearizations along predicted states and inputs, corresponding to a non-robustified LTV-MPC of \cite{berberich2022linear}.} ({\textsc{lpv-mpc}}), and a nonlinear MPC ({\textsc{nmpc}}) with access to the exact model of \eqref{eq:Control_Damp_Duff}, which serves as the ground truth. Our proposed {c\textsc{k}o\textsc{lpv-mpc}} does not use the system equations but instead relies on training data ($n=704$) collected in the state domain $-2 \leq x_1 \leq 2$, $-2 \leq x_2 \leq 2$, and input domain $-2 \leq u \leq 2$. The bilinear model is constructed via $r$-Ny-cKOR with 100 uniformly randomly sampled inducing points, $r=29$, and a hyperparameter grid search over validation data. For the MPC implementation, we rely on the state-of-the-art ForcesPRO\footnote{Freely-available academic licensing at \href{https://www.embotech.com/forcespro}{www.embotech.com/forcespro}.} solver \cite{FORCESPro,FORCESNLP} throughout. 

From Figures \ref{fig:Duffing_State_plot_full}, it can be observed that {c\textsc{k}o\textsc{lpv-mpc}}, {\textsc{lpv-mpc}}, and {\textsc{nmpc}} are almost identical, while they showcase a clear performance improvement compared to {\textsc{lmpc}}. Specifically, {\textsc{lmpc}} requires extra input effort between $1s$ and $3s$, because around $x_1=-0.2$ and $x_2=-1.25$, the {\textsc{lmpc}} solution goes against the vector field. The other approaches use the vector field to reach the setpoint and thus require less input. In addition, there is a large offset between the settled state of {\textsc{lmpc}} and the setpoint, which then leads to an additional input effort to stabilize the origin. The other approaches are almost identical, which implies that: 1) the system and/or control task is not challenging enough, and that 2) the bilinear Koopman model accurately identifies the nonlinear system.
\begin{figure}[t]
  \centering
  \includegraphics[width=\plotfactor\linewidth]{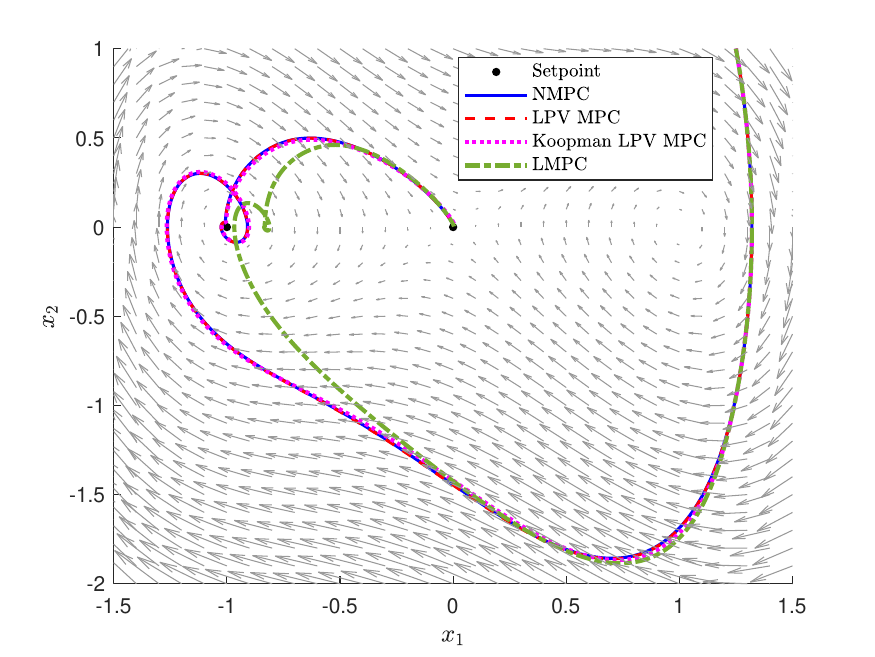}
  \caption{State-space response of the simulated Duffing oscillator under piecewise constant reference tracking for the MPC.}
  \label{fig:Duffing_State_plot_full}
  \postFloat
\end{figure}%
\looseness=-1
\subsubsection*{Unstable system with linearly uncontrollable origin} 
For the next example, we use the following Van der Pol oscillator
\begin{align}
\bm{\Dot{x}}=\begin{bmatrix}
    x_2 \\ -x_1-\frac{1}{2}x_2(1-x_1^2)
    \end{bmatrix}+\begin{bmatrix}
    0 \\ x_1 u
    \end{bmatrix},
\end{align}
which is simulated using RK4 numerical integration. Here, the state is measured with sampling time $T_{\mathrm{s}} = 0.05s$, and the input is applied in a synchronized ZOH manner.
This is an interesting example for three reasons: the origin is linearly uncontrollable, the optimal solution to drive the system from an arbitrary initial condition to the origin is known, i.e., the infinite-horizon optimal controller for the cost function $J(\bm{x},u)=x_2^2 + u^2$ is $u=-x_1 x_2$ \cite{Nevistic1996ConstrainedNO}. Thus, the MPC control task is to minimize the aforementioned cost function over a finite-horizon. In other words, this system can showcase the potential of nonlinear control techniques as opposed to controllers based on linear state-space models. We evaluate the performance on, the initial conditions $[-2,-2]^{\intercal}$, $[-2,2]^{\intercal}$, $[2,-2]^{\intercal}$ and $[2,2]^{\intercal}$ with $\bm{Q}=\text{diag}(0,1)$, $\bm{Q}_T=\bm{Q}$ and $R=1$ to match the above cost function. The horizon is chosen as the minimal one such that {\textsc{nmpc}} stabilizes the origin. This resulted in a horizon of 100 steps, which is relatively big and thus another indicator of a difficult-to-control system. The latter, in combination with being open-loop unstable in the considered region, complicates the data-gathering step for learning. For the training and the validation data, the {\textsc{nmpc}} controller is used to control the system to the origin with an exploratory uniform random disturbance within the interval $[-2,2]$. The hyperparameter and regularization parameter are obtained by employing a grid search on the validation data. The same initialization of the scheduling is used as in the previous example.
Figure \ref{fig:Van_der_Pol_state_plot_10s} shows the resulting control trajectories for {\textsc{nmpc}} and {c\textsc{k}o\textsc{lpv-mpc}}. For these settings, the {\textsc{lpv-mpc}} and {\textsc{lmpc}} controllers \textit{fail to stabilize the origin, due to limitations of linearization and the linearly uncontrollable property}. The latter clearly highlights an advantage for the {c\textsc{k}o\textsc{lpv-mpc}} scheme. However, the control trajectories of {c\textsc{k}o\textsc{lpv-mpc}} deviate from the trajectories of the {\textsc{nmpc}} with full exact model knowledge. The aforementioned deviations are quantified with RMSE: RMSE of {c\textsc{k}o\textsc{lpv-mpc}} is $2.83\cdot 10^{-1}$ and of {\textsc{nmpc}}: $1.01\cdot 10^{-1}$. Due to its data-driven nature, the model is inherently approximative, while the {\textsc{nmpc}} works with perfect system knowledge. The latter is illustrated in Figure \ref{fig:Van_der_Pol_time_plot_{U}_{X}0_2_min2} by requiring a higher input as opposed to following the vector field. Note that {c\textsc{k}o\textsc{lpv-mpc}} \textit{solves a single QP at every timestep and does not require any initial guess for the scheduling or employment of model-based planners}.
\begin{figure}[t]
  \centering
  \includegraphics[width=\plotfactor\columnwidth]{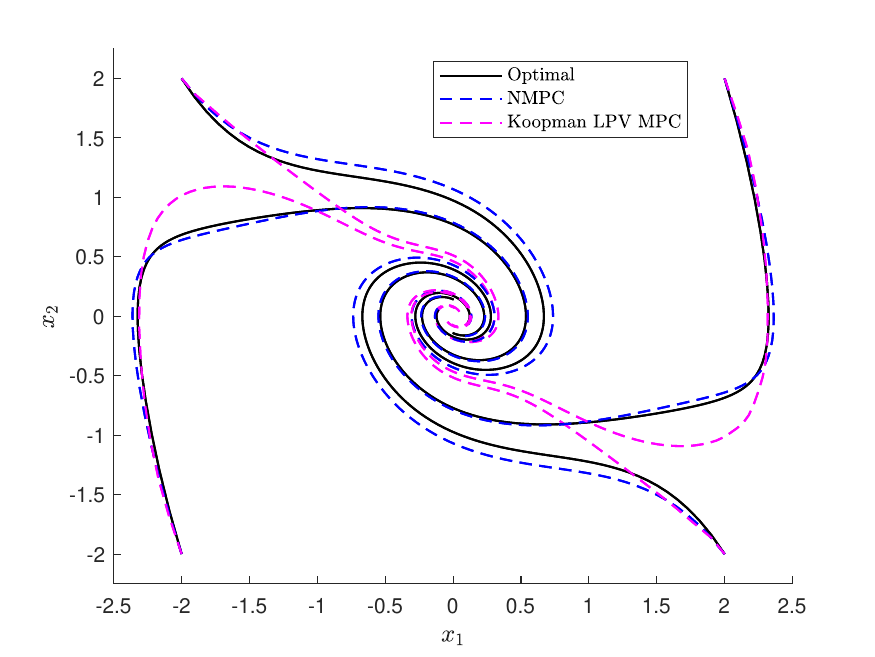}
  \caption{State-space responses of the simulated Van der Pol oscillator for the c\textsc{k}o\textsc{lpv-mpc} vs \textsc{nmpc}.}
  \label{fig:Van_der_Pol_state_plot_10s}
  \postFloat
\end{figure}%
\begin{figure}
  \centering
  \includegraphics[width=\plotfactor\linewidth]{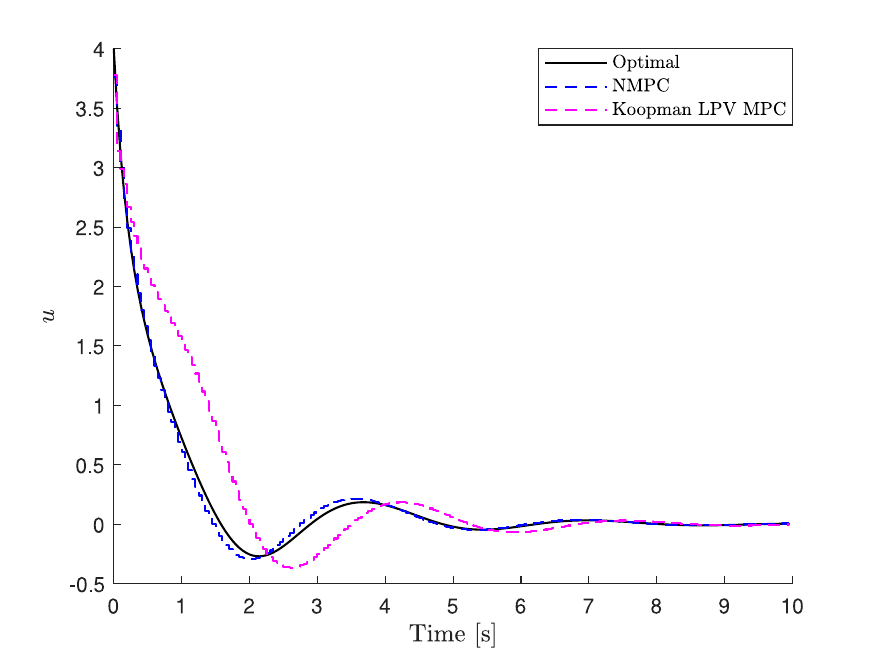}
  \caption{Comparing c\textsc{k}o\textsc{lpv-mpc} and \textsc{nmpc} control input signals form the initial condition $\bm{x}_0=[2,-2]^{\intercal}$ for the simulated Van der Pol.}
  \label{fig:Van_der_Pol_time_plot_{U}_{X}0_2_min2}
\label{fig:Van_der_Pol_state_plot_10s_and_Van_der_Pol_time_plot_{U}_{X}0_2_min2)}
\postFloat
\end{figure}
\looseness=-1
\section{Conclusion}\label{sec:Concl}
We introduce a novel framework for learning Koopman operators for control-affine systems in reproducing kernel Hilbert spaces (RKHS), grounded in risk minimization and infinite-dimensional regression.
By establishing the equivalence of various operator formulations, we enable the simultaneous use of vector-valued regression for learning and LPV Koopman forms for prediction and control. This equivalence demonstrates that control operators can be fully described using scalar-valued kernels, bridging a critical gap in existing operator representations and fully leveraging the available RKHS structure.
Our proposed empirical estimators are finite-rank operators over RKHSs that reduce exactly to finite-dimensional predictors, regardless of feature and input dimensions. Furthermore, we prove that our approach allows arbitrarily accurate operator norm approximations under minimal assumptions using finite-rank operators.
Additionally, we propose sketched estimators to improve the scalability of our method by reducing computational complexity to $\mathcal{O}(m^2)$ for large-scale problems, where $m$ may be much smaller than the data cardinality.
As implied by our theoretical analysis, the numerical experiments
demonstrate superior prediction accuracy compared to
bilinear EDMD, especially in high dimensions.
Finally, our learned models integrate seamlessly with LPV techniques for model predictive control, offering a viable alternative to typical MPC approaches.
\preSection
\section*{Acknowledgments}
The authors thank Nicolas Hoischen and Max Beier for their feedback on the manuscript and Jan Decuyper for sharing the details of the Karman vortex example.
\bibliographystyle{IEEEtran}
\bibliography{references_harmonized}
\preApdx
\appendix
\begin{figure}
\centering
\includegraphics[width=0.8\columnwidth]{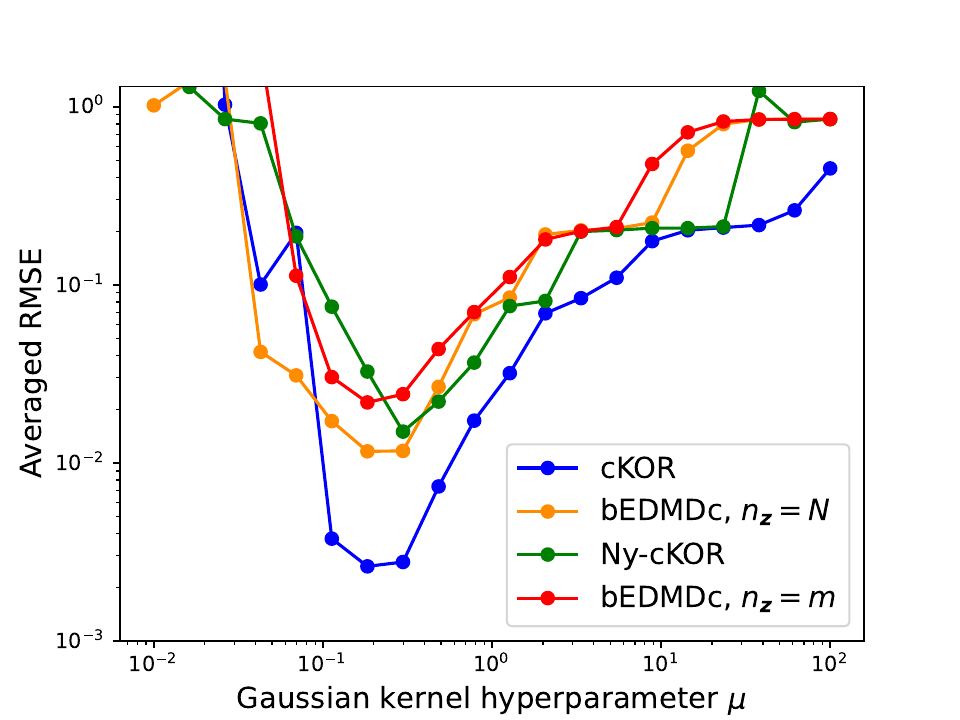}
  \caption{RMSE of 1-step-prediction for cKOR, Ny-cKOR and bEDMDc models over values of ${\mu}$ and regularizer $\gamma = 10^{-10}$ for the Duffing oscillator example \ref{subsec:2_num_examples}.}
\label{fig:RMSE_vs_mu_cKOR_bEDMDc_full_1e_06_and_RMSE_vs_mu_Ny_cKOR_bEDMDc_tilde_1e_06}
\postFloat
\end{figure}%
\subsubsection*{Additional ablation}\label{app:appendix_exp_results}
We expand the numerical study of the Duffing oscillator \ref{subsec:2_num_examples}, with an ablation study for an even lower level of regularization $\gamma=10^{-10}$. As shown in Figure \ref{fig:RMSE_vs_mu_cKOR_bEDMDc_full_1e_06_and_RMSE_vs_mu_Ny_cKOR_bEDMDc_tilde_1e_06}, our cKOR approach continues to significantly outperform the parametric approaches with its sketched Ny-cKOR version.
\preSection
\subsubsection*{Koopman-Based \textit{Iterated} LPV-MPC}\label{sec:co_LPV_MPC}
To provide an efficient scheme for controlling the original nonlinear system via the cKOR method, which provided surrogate models, we propose a model predictive control (MPC) approach that extends the \textit{iterated} LPV scheme of \cite{hoekstra2023computationally} to our control operator setting. 
 The control problem that we want to address is that given a measurement of the state $\bm{x}({k})$ of the original nonlinear system \eqref{eq:ncs} at time-instant $t$, based on a cKOR model, solve a predictive control problem on a finite time horizon $T$ with computation cost close to an LTI-MPC to obtain a control sequence $\{\bm{u}_{i|t}\}_{i=0}^{T-1}$ such that the predicted response of \eqref{eq:ncs} follows a prescribed reference trajectory. 
 Then, $\bm{u}_{0|t}$ is applied to the system and at the next time-instant ($t{+}1$), $\bm{x}(t{+}1)$ is measured to start the next control cycle. 
For our cKOR model, the {c\textsc{k}o\textsc{lpv-mpc}} optimization problem is
\begin{subequations}
    \begin{align}
\!\!\!\min_{\bm{u}_{0|t} {\cdots} \bm{u}_{T-1|t}}&\lilsum_{i=1}^{T}{ \left(\lVert{{{\bm{\mathsf{z}}}}}_{i|t}{-}\bm{\overline{\mathsf{z}}}_{i|t}\rVert^2_{\bm{Q}_{\mathsf{z}}} {+} \lVert\bm{u}_{i|t}{-}\bm{\overline{u}}_{i|t}\rVert^2_{\bm{R}} \right)}{+}q_{T|t}\label{eq:opt_cost}\\
\textrm{s.t.} \quad &{{{\bm{\mathsf{z}}}}}_{i+1|t}={\bm{A}}{{{\bm{\mathsf{z}}}}}_{i|t}+\mpcB(\scheduleVar_{i|t})\bm{u}_{i|t}, \label{eq:opt_eq_state}  \\
  \bm{x}_\mathrm{min}\leq & {\bm{C}}{{{\bm{\mathsf{z}}}}}_{j|t} \leq \bm{x}_\mathrm{max},
\quad\bm{u}_\mathrm{min}\leq \bm{u}_{i|t}\leq \bm{u}_\mathrm{max} \label{eq:opt_ineq}
\end{align}
\end{subequations}
where $\bm{Q}_{\mathsf{z}}={\bm{C}}^\top\bm{Q}{\bm{C}}$ and $q_{T|t}{=}\lVert{{{\bm{\mathsf{z}}}}}_{T|t}-\bm{\overline{\mathsf{z}}}_{T|t}\rVert^2_{\bm{Q}_{\mathsf{z}}(T)}$ is the terminal cost. The measured state at time $t$, i.e., $\bm{x}_{0|t}=\bm{x}(t)$, is lifted to determine ${{{\bm{\mathsf{z}}}}}_{0|t}=\bm{z}(\bm{x}_{0|t})$, while $\bm{\overline{x}}_{i|t}$ and $\bm{\overline{u}}_{i|t}$ denote the reference state and input. The matrices $\bm{Q}, \bm{Q}(T) \in \Set{R}^{n_{{x}} \times n_{{x}}}$ and $\bm{R} \in \Set{R}^{n_{{u}} \times n_{{u}}}$ are stage and terminal weights, tuned to user-specified tracking performance, and $\bm{x}_\mathrm{min}, \bm{x}_\mathrm{max}, \bm{u}_\mathrm{min}, \bm{u}_\mathrm{max}$ bound the state and input sequences.
To efficiently handle the bi-linearity of the cKOR model, a scheduling variable $\scheduleVar_{i|t}$ is introduced so that $\mpcB(\scheduleVar_{i|t}) = [{\bm{B}}_1\scheduleVar_{i|t} \mid \cdots \mid {\bm{B}}_{n_{{u}}}\scheduleVar_{i|t}]$.

The core idea is that at any given time-instant $t$, for a fixed scheduling sequence $\{\scheduleVar_{i|t}\}_{i=0}^{T-1}$, \eqref{eq:opt_eq_state} is used to formulate a linear MPC problem that can be solved efficiently as a quadratic program (QP). Then, the resulting control sequence $\{\bm{u}_{i|t}\}_{i=0}^{T-1}$ is used to forward simulate the cKOR model to compute a new sequence $\scheduleVar_{i|t}= \bm{\mathsf{z}}(i+t)$ on which a new sequence of control matrices $\tbm{B}(\scheduleVar_{i|t})$ is computed in a receding horizon fashion akin to the iterated MPC scheme of \cite{hoekstra2023computationally}.
\preBio
\begin{IEEEbiography}[{\bioPic{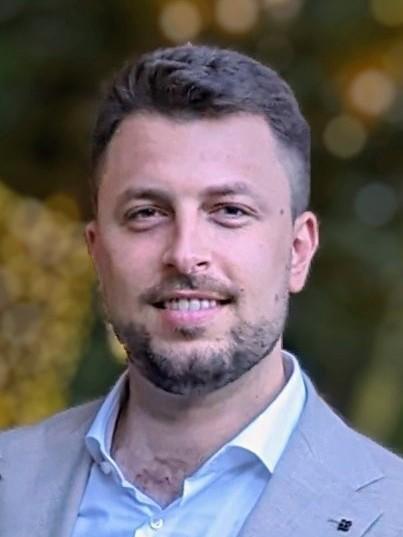}}]{Petar Bevanda}
	received his B.Sc. and M.Sc. degrees in Information Technology and
Electrical Engineering from the University of Zagreb, Croatia and the Technical University of Munich (TUM), Germany,
respectively, in 2017 and 2019. Since 2020, he has been a PhD student at the
Chair of Information-oriented Control, TUM School of Computation, Information and Technology. His current research interests include operator-theoretic machine
learning and data-driven control of uncertain systems.
\end{IEEEbiography}
\preBio
\begin{IEEEbiography}[{\bioPic{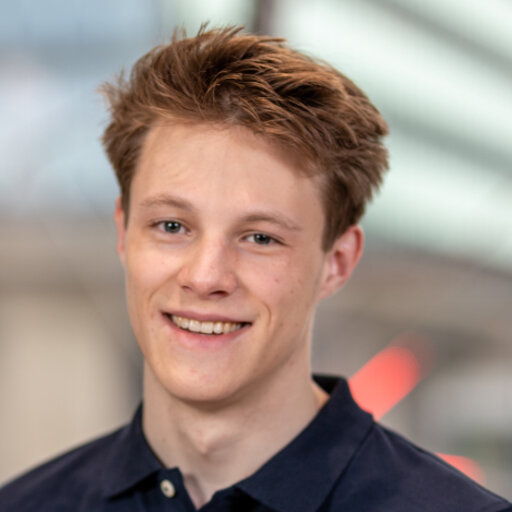}}]{Bas Driessen} is a graduate in Mechanical Engineering from Eindhoven University of Technology (TU/e), where he obtained his master's degree in 2023 with a focus on systems and control. His master's thesis was part of an internship at the Chair of Information-oriented Control at TUM School of Computation, Information and Technology, Technical University of Munich (TUM). His research
interests include modeling and identification of nonlinear systems, machine learning techniques and predictive control.
\end{IEEEbiography}
\preBio
\begin{IEEEbiography}[{\bioPic{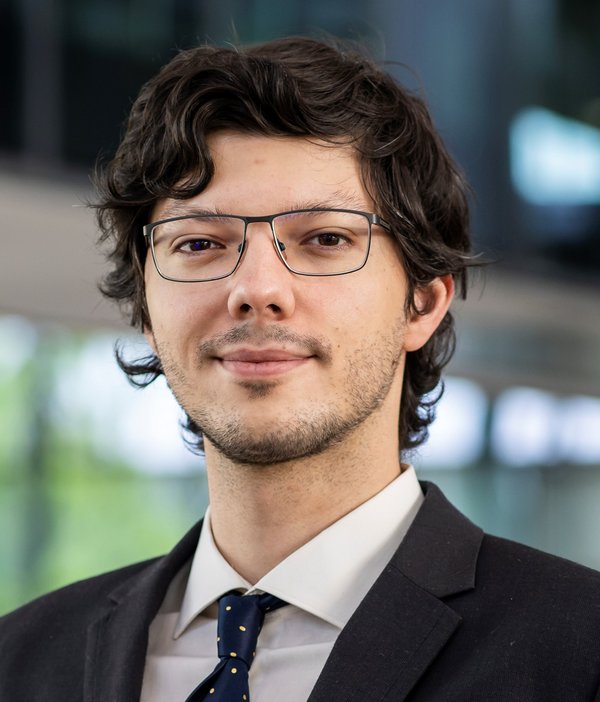}}]{Lucian Cristian Iacob} is a Doctoral Candidate at the Control Systems (CS) Group in the Department of Electrical Engineering. His current research is on modeling and analysis of nonlinear systems using the Koopman and Linear Parameter-Varying (LPV) frameworks, under the supervision of Roland Tóth and Maarten Schoukens, in the Automated Linear Parameter-Varying Modeling and Control Synthesis for Nonlinear Complex Systems (ARPOCS) project. His main research interests include modeling and identification of nonlinear systems and machine learning techniques.
\end{IEEEbiography}
\preBio
\begin{IEEEbiography}[{\bioPic{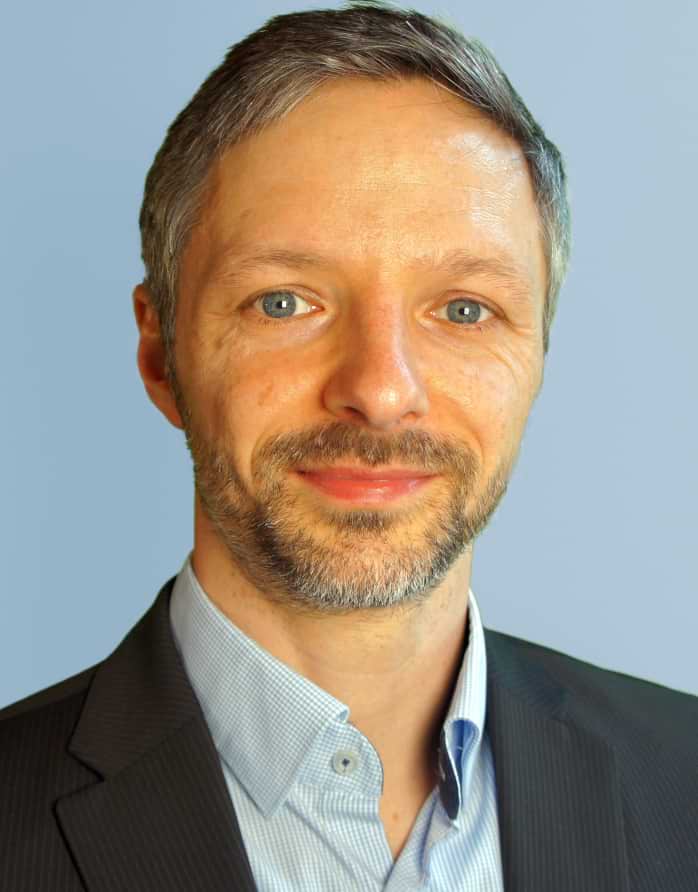}}]{Stefan Sosnowski}
received the Dipl.-Ing. degree and
Dr.-Ing. degree in electrical engineering from the
Technical University of Munich (TUM), Munich,
Germany, in 2007 and 2014, respectively.
Since 2014, he is a Postdoctoral Fellow with the Chair of
Information-oriented Control, at the School of Computation, Information and Technology at TUM. His research interests include bioinspired and underwater robotics, nonlinear control, distributed dynamical systems, and multi-agent systems.
\end{IEEEbiography}
\preBio
\begin{IEEEbiography}[{\bioPic{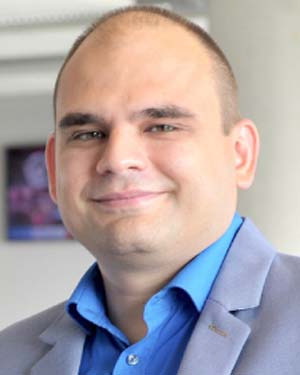}}]{Roland Tóth} received the Ph.D. degree in electrical engineering with Cum Laude distinction from the Delft Center for
Systems and Control (DCSC), Delft University
of Technology (TU Delft), Delft, The Netherlands, in 2008.
He was a Postdoctoral Research Fellow with
TU Delft in 2009 and Berkeley in 2010. He held
a position with DCSC, TU Delft in 2011–2012.
He is currently a Full Professor with the Control
Systems Group, Eindhoven University of Technology, Eindhoven, The Netherlands, and a Senior Researcher with the HUN-REN Institute for Computer Science and Control (SZTAKI), Budapest, Hungary. His research interests include identification and control
of linear parameter-varying (LPV) and nonlinear systems, developing
machine learning methods with performance and stability guarantees, model predictive control and behavioral system
theory with a wide range of applications, including precision mechatronics and autonomous vehicles.
He is a Senior Editor of the IEEE Transactions on Control Systems Technology and an Associate Editor of Automatica. 
\end{IEEEbiography}
\preBio
\begin{IEEEbiography}[{\bioPic{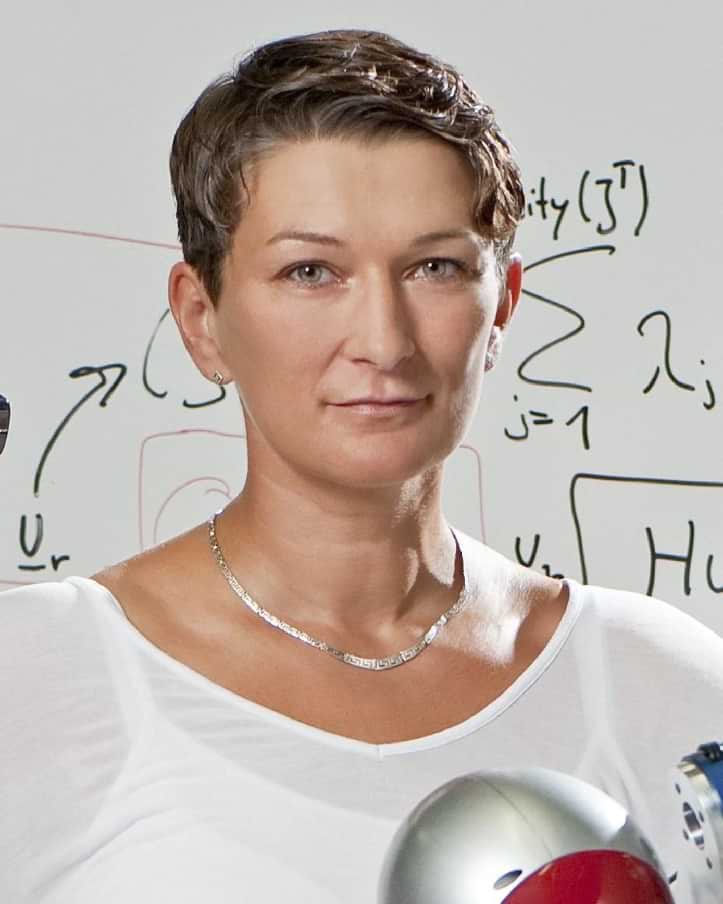}}]{Sandra Hirche} holds the TUM Liesel Beckmann
Distinguished Professorship and heads the Chair
of Information-oriented Control at TUM School of Computation, Information and Technology, Technical University of Munich (TUM), since
2013. She received the diploma engineer degree
in Aeronautical and Aerospace Engineering in 2002
from the Technical University Berlin, Germany, and
the Doctor of Engineering degree in Electrical and
Computer Engineering in 2005 from TUM. Her main research interests include learning, cooperative,
and distributed control with application in human-robot interaction, multirobot systems, and general robotics.
\end{IEEEbiography}
\end{document}